\documentclass[aps,twocolumn,preprintnumbers,groupedaddress]{revtex4-2}
\usepackage[T1]{fontenc}
\usepackage[margin=0.8in]{geometry} 
\usepackage{amsmath,amssymb,amsthm,mathtools}
\usepackage{dsfont}
\usepackage[svgnames]{xcolor}
\usepackage[
  colorlinks=true,
  linkcolor=DarkRed,
  citecolor=DarkGreen,
  urlcolor=DarkBlue
]{hyperref}

\usepackage{xurl}
\usepackage[
  colorlinks=true,
  linkcolor=DarkRed,
  citecolor=DarkGreen,
  urlcolor=DarkBlue
]{hyperref}

\usepackage{mathrsfs}
\usepackage{enumitem}
\usepackage{cleveref}

\usepackage{physics}
\usepackage{fancyhdr}
\usepackage{comment}
\newtheorem{definition}{Definition}
\newtheorem{theorem}{Theorem}
\newtheorem{lemma}{Lemma}
\newtheorem{corollary}{Corollary}
\newtheorem*{remark}{Remark}

\newcommand{\e}[1]{\mathrm{e}^{#1}}

\newcommand{\im}{\mathrm{i}}
\usepackage[font=small,labelfont=bf]{caption}
\usepackage{multirow}
\usepackage{subcaption}

\newcommand{\bkm}[2]{\langle #1, #2\rangle^{\mathrm{BKM}}_\rho}

\usepackage{amsfonts, amsmath, amssymb}
\usepackage{xcolor}
\usepackage{tikz}
\usepackage{wrapfig}
\usetikzlibrary{arrows.meta,positioning}
\usepackage{titlesec}

\usepackage{hyperref}
\hypersetup{pdfborder={0 0 0}}

\titlespacing{\subsection}{5pt}{5pt}{5pt}
\titlespacing{\section}{2pt}{6pt}{6pt}
\titlespacing{\subsubsection}{5pt}{6pt}{6pt}

\begin{document}

\preprint{LA-UR-26-23901}
\title{\textbf{Operator Score Matching for Learning Quantum Hamiltonians}}

\author{Shreya Shukla}
\email{sshukla@lanl.gov}
\affiliation{Theoretical Division, Los Alamos National Laboratory, Los Alamos, New Mexico 87545, USA}
\affiliation{Center for Quantum Computing, Los Alamos National Laboratory, Los Alamos, New Mexico 87545, USA}

\author{Abhijith Jayakumar}
\email{abhijithj@lanl.gov}
\affiliation{Theoretical Division, Los Alamos National Laboratory, Los Alamos, New Mexico 87545, USA}
\affiliation{Center for Quantum Computing, Los Alamos National Laboratory, Los Alamos, New Mexico 87545, USA}

\author{Andrey Y.\ Lokhov}
\email{lokhov@lanl.gov}
\affiliation{Theoretical Division, Los Alamos National Laboratory, Los Alamos, New Mexico 87545, USA}
\affiliation{Center for Quantum Computing, Los Alamos National Laboratory, Los Alamos, New Mexico 87545, USA}

\begin{abstract}
    Learning quantum Hamiltonians from low-temperature thermal state measurements is a fundamental problem in quantum physics. The scalability of existing methods is limited by the complexity of semidefinite optimization problems or partition function computation. Here, we develop a quantum analog of classical score matching that exploits generalized notions of derivatives and integration by parts in operator algebras. We introduce a new \emph{Operator Score Matching} loss function that recovers Hamiltonian parameters using simple gradient descent without the need to compute intractable normalization constants.  
    Numerical experiments on the Gibbs states of the transverse-field Ising model, XXZ spin chain,
    Fermi-Hubbard model, and lattice $\phi^4$ theory demonstrate efficient parameter recovery using
    finite-depth truncations of nested commutator expansions.
    Our results establish Operator Score Matching as a practical, partition-function-free framework for quantum Hamiltonian learning.
\end{abstract}

\maketitle

\section{INTRODUCTION}

Determining the microscopic Hamiltonian governing a quantum system from experimental observations is among the most fundamental inverse problems in physics. While forward simulation (predicting observables given a Hamiltonian) has seen dramatic advances through tensor networks~\cite{orus2014,cirac2021}, quantum Monte Carlo~\cite{foulkes2001}, and neural network quantum states~\cite{carleo2017,stokes2020,choo2020}, the inverse problem of \emph{learning} the Hamiltonian remains substantially more challenging. This challenge is central to quantum computing, where validating engineered Hamiltonians against target models is essential~\cite{Cirac2012,Bloch2012}, and to condensed matter physics, where extracting effective low-energy descriptions from spectroscopic or scattering data could reveal emergent phenomena~\cite{keimer2017}.

The specific setting we consider, learning the parameters of a local Hamiltonian from copies of its Gibbs state $\rho \propto \e{-H}$ (where we absorb the inverse temperature $\beta$ into $H$), has a rich recent history~\cite{Cra+10,SLP11,AA23}. Anshu et al.~\cite{anshu2021} first established polynomial sample-complexity bounds for this task, but their algorithm has a worst-case runtime exponential in the system size, requiring a computation equivalent in difficulty to evaluating the partition function, which is intractable even classically~\cite{Mon15}. Haah et al.~\cite{HKT22} gave an efficient algorithm for the high-temperature regime and proved that this scaling is optimal up to constants in the exponent. The low-temperature regime (precisely where quantum phenomena such as topological order and superconductivity are most prominent~\cite{AFOV08}), however, remained open until Bakshi et al.~\cite{bakshi2023learningquantumhamiltonianstemperature} (BLMT) gave a polynomial-time algorithm valid at any constant temperature. Their approach introduces a flat polynomial approximation to the exponential, formulates learning as a polynomial constraint system, and solves it via the Sum-of-Squares (SoS) hierarchy~\cite{barak_sos,FKP19}.

Although theoretically efficient for fixed inverse temperature, the method of Bakshi et al. relies on a high-degree sum-of-squares relaxation. The practical scalability of such relaxations is often limited by the size and computational cost of the resulting semidefinite programs, particularly as the number of variables and polynomial degree increase \cite{ahmadi2019dsos, majumdar2020scalability}. Since the degree required by the algorithm grows rapidly with inverse temperature and the target accuracy, the resulting optimization becomes computationally demanding for larger systems, especially in the low-temperature regime. This motivates learning methods that retain the advantage of avoiding the partition function while employing a simpler optimization procedure.


A natural source of inspiration for circumventing the partition function bottleneck comes from classical statistics. Score matching~\cite{hyvarinen2005} avoids the computation of intractable normalization factors by matching the gradient of the log-probability, or the score function, rather than the probability itself, yielding consistent estimators that never require computing the partition function. The idea has found wide application in machine learning~\cite{song2019,song2021}, energy-based models~\cite{lecun2006}, and statistical physics, where it enables learning of renormalization group flows from Monte Carlo data~\cite{Shukla2025_LSFT,Shanahan:2018vcv}. Very recently, rigorous finite-sample bounds have been established for learning classical Gibbs distributions with continuous variables using score matching~\cite{smedira2026finite}. 

Extending these ideas to quantum systems, however, encounters a fundamental obstacle: the intrinsic non-commutativity of quantum mechanics. Consequently, the classical score $\nabla_x \log p(x)$ lacks a unique quantum analog, as different operator orderings and notions of differentiation yield inequivalent generalizations. Furthermore, naive quantum generalizations of classical loss functions fail to satisfy two essential criteria: \emph{consistency} (possessing a global minimum exactly at the true model parameters) and \emph{estimability} (being directly computable from copies of the target density matrix without prior knowledge of the true parameters).

In this work, we develop an algorithm building on the philosophy of classical score matching while introducing genuinely quantum ingredients. We define an operator score based on conjugation by the density matrix, which eliminates the dependence of the loss on the partition function. 
Using this operator score, we define a loss that can be estimated using independent copies of the unknown density matrix by measuring expectation values of local operators. Moreover, we can show that this loss has a unique global minimum that corresponds to the unknown Hamiltonian.  The method also applies to any system with a finite-dimensional Hilbert space, including systems beyond qubits.  
Interestingly, while the classical score $\nabla\log p(x) = -\nabla H(x)$ is a sum of local terms (since differentiation commutes with the Hamiltonian), the quantum score operator generates non-local contributions through nested commutators, as depicted schematically in Fig.~\ref{fig:learning_in_detail}.

For this new operator-score loss, we establish a local Polyak--\L{}ojasiewicz (PL) inequality~\cite{polyak1963,karimi2016} that guarantees exponential convergence of gradient descent near the true parameters. The PL constant $\mu$ is controlled by the minimum eigenvalue of a Gram matrix $G(\theta)$ that governs the convergence rate and inverse temperature $\beta$. 

We validate the framework numerically on four models spanning different physical settings: the transverse-field Ising model, the XXZ spin chain, the Fermi--Hubbard model, and lattice $\phi^4$ theory. The fermionic example uses Majorana fermion direction operators and works directly in Fock space, confirming that the method requires no mapping to qubits and is agnostic to particle statistics. The framework extends naturally to lattice field theories, where score matching has already proven valuable in the classical setting~\cite{Shukla2025_LSFT}, opening possibilities for data-driven approaches to strongly correlated quantum systems.\\
\\


\section{LEARNING PROBLEM AND OPERATOR SCORE MATCHING}

\subsection{Problem formulation}

We consider the parametrized Hamiltonian
\begin{equation}
H_\theta=\sum_{i=1}^m \theta_i h_i,
\qquad \theta\in\mathbb{R}^m,
\label{eq:Htheta_main}
\end{equation}
with fixed Hermitian operators $\{h_i\}$ and a data state $\rho \propto \e{-H'}$. At the true parameters $\theta^*$, $H_{\theta^\ast} = H'$. We work in the standard setting \cite{anshu2021,bakshi2023learningquantumhamiltonianstemperature} where the operator basis $\{h_i\}$ for representing these Hamiltonians is known, but the associated parameters are not. The goal is to recover $\theta^* $ such that $H_{\theta^*} = H'$ from expectations measured in the state $\rho$.

\subsection{Operator Score Matching}

The key insight for attacking the problem is that quantum systems lend themselves to derivations that serve as analogs of derivative operators in classical theories. We can circumvent the fundamental bottleneck of computing the partition function using these derivations and then match these generalized gradients between the parametrized and data Hamiltonians to solve the learning problem.\\
\\
\textbf{Classical score matching.} Let us recall the classical setting that motivates our approach. For a probability
density $p_\theta(x) \propto \e{-E_\theta(x)}$ (we set $\beta =1$, unless otherwise specified), the 
\emph{score} is the gradient of the log-density,
$s_\theta(x) := \nabla_x \log p_\theta(x) = -\nabla_x E_\theta(x)$, which is
manifestly independent of the partition function $Z(\theta)$. The idea behind score matching is therefore to 
estimate the parameters $\theta$ by minimizing the following loss
\begin{align}
    \mathcal{L}(\theta) &= \mathbb{E}_{p}\,\|s_\theta - s_{\mathrm{data}}\|^2\;.
\end{align}

Minimizing this loss over the parameters $\theta$, in principle, allows us to learn the unknown distribution $p$ as this loss vanishes only when $p_\theta = p$.
However, as written, this loss depends on the unknown data score $s_{\mathrm{data}}$. This issue can be avoided by using an integration-by-parts identity that allows us to express the loss in a form that can be optimized using only samples of the distribution~\cite{hyvarinen2005}. In the classical setting, this framework can be extended to write down more general estimators \cite{shukla2025_EFL}.\\
\\
\textbf{Defining the quantum score and the loss.} To define a consistent and estimable quantum analog that works with density matrices, we first replace the classical gradient $\nabla_x$
by a \emph{derivative operator} built from commutators, so that the resulting
score is again independent of $Z(\theta)$ and the dependence on the unknown
target is removed by an analogous trace integration-by-parts identity (see Eq.~\eqref{eq:IBP_main}).



To construct the operator score loss, we define the generalized derivative of an operator $A$ along the direction $F_k$ as
\begin{align}
    \mathscr D_k A &:= 
    [F_k,e^{-A}]e^A .
    \label{eq:score_from_RU}
\end{align}

From this, we can compute the \emph{quantum score} of the model density
$p_\theta  = e^{-H_\theta}/Z(\theta)$ (along the direction $F_k$) as
\begin{align}
    \mathscr D_k(-\log p_\theta)
    &:=
    [F_k,e^{\log p_\theta}]e^{-\log p_\theta},
    \notag \\
    &=
    [F_k,e^{-H_\theta}Z(\theta)^{-1}]e^{H_\theta}Z(\theta),
    \notag \\
    &=
    [F_k,e^{-H_\theta}]e^{H_\theta},
    \label{eq:quantscore}
\end{align}
Using this notion, we can formally define the operator score-matching loss:
\begin{align}
    \mathcal L_{\rm osm}(\theta)
    =
    \sum_k
    \left\langle
    |\mathscr D_kH_\theta-\mathscr D_kH'|^2
    \right\rangle_\rho .
    \label{eq:Lstd_def}
\end{align}

It is clear that $\mathcal{L}(\theta^*) = 0$; hence the true Hamiltonian is a global minimizer of the loss. However, this does not imply that it is unique. We can show that by choosing an appropriate set of ${F_k}$ operators, we can always write down a loss function with a unique global minimum at the true Hamiltonian parameters.

Since $\rho$ is a full-rank positive matrix, the condition $\mathcal L (\theta) = 0$ implies that $\mathscr{D}_k H_\theta = \mathscr{D}_k H'$ for all directions $k$. Expanding the derivative and rearranging gives
\begin{align}
    \e{-H_\theta}F_k\,\e{H_\theta} &= \e{-H'}F_k\,\e{H'}\quad \forall k\notag \\
    \implies [ \e{H'}\e{-H_\theta}, F_k ] &= 0 \quad \forall k\;.
\end{align}
Thus, a vanishing loss implies that $Y := \e{H'}\e{-H_\theta}$ commutes with all direction operators. Clearly, this also implies that $Y$ commutes with any linear combination of $F_k$ operators. Moreover, since $[Y, F_k F_{k'}] = [Y,F_k]F_{k'} + ~F_k[Y, F_{k'}],$ this also implies that the commutator vanishes for the entire matrix algebra generated by the $F_k$ operators. 

Thus, choosing the direction operators $F_k$ to generate the full algebra forces $Y = c\cdot\mathds{1}$, which implies $H' = H_\theta + \log c\mathds{1}$, i.e., the Hamiltonian is uniquely fixed by $\mathcal{L}(\theta) = 0$, up to an unobservable energy shift.

This also shows that the number of $F_k$ operators we need to choose is much smaller than the dimension of the Hilbert space. For instance, the full algebra of an $n$-qubit system can be generated using only single-qubit (Pauli) $X \equiv \sigma^x$ and $Z \equiv \sigma^z$ operators. The choice of $F_k$ operators also affects the computability of the loss function.  We will discuss in Section~\ref{sec:choose_Fk} how this choice can be made in any finite-dimensional local Hilbert space, including fermionic systems.\\
\\
\textbf{Estimability of $\mathcal{L}(\theta)$:} Even after the directions are fixed, the loss still depends on the unknown $H'$. This is remedied by an integration-by-parts identity that follows from the cyclicity of the trace [Eq.~\eqref{eq:app_IBP} in Appendix~\ref{app:std_loss}]:
\begin{align}
    \langle O\,\mathscr D_kH'\rangle_\rho   = -\langle [F_k,O]\rangle_\rho\;.
    \label{eq:IBP_main}
\end{align}

This identity replaces terms involving the
unknown target score $\mathscr D_kH'$ with expectation values of known
commutators. Consequently, up to a $\theta$-independent constant,
\begin{align}
    \mathcal L_{\rm std}(\theta)
    =
    \sum_k
    \left\langle
    |\mathscr D_kH_\theta|^2
    \right\rangle_\rho
    -
    2\Re
    \sum_k
    \langle
    [F_k^\dagger,\mathscr D_kH_\theta]
    \rangle_\rho .
    \label{eq:D_kloss}
\end{align}
See Appendix~\ref{app:std_loss} for a detailed derivation.\\
\\
\textbf{Approximation and Truncation of \texorpdfstring{$\mathscr{D}_kH_\theta$}{}:}\label{sec:approx}
A major practical hurdle in evaluating the loss function is computing the nonlinear quantity $\mathscr{D}_kH_\theta=[F_k,\e{-H_\theta}]\e{H_\theta}$ and the associated expectations.
Using the BCH identity, we have
\begin{align}
    \e{-H}F\,\e{H} =
    \sum_{\ell=0}^{\infty}\frac{(-1)^\ell}{\ell!}
    [H,F]_\ell\;,
\end{align}
and we can obtain the nested-commutator expansion for $\mathscr{D}_kH_\theta$ as
\begin{align}
    \mathscr{D}_kH_\theta
    &= F_k-\e{-H_\theta}F_k\e{H_\theta}
    = \sum_{\ell=1}^{\infty}\frac{(-1)^{\ell+1}}{\ell!}\,[H_\theta,F_k]_\ell\;,
    \label{eq:Dk_series}
\end{align}
where $[H,F]_\ell := [H,[H,\cdots[H,F]\cdots]]$ denotes the $\ell^{\rm th}$ nested commutator of $H$ with $F$.
Truncating \eqref{eq:Dk_series} at depth $d$ yields a polynomial in $\theta$ whose degree grows with $d$ and the commutator structure of the chosen operator family.  

In principle, one may replace the exponential by a polynomial approximation with improved uniform control. 
Namely, if $q_d(x)$ is a degree-$d$ polynomial approximating $\e{-x}$ on the relevant spectral interval, then one may define
\begin{align}
    \mathscr{D}^{(d)}_{k,\mathrm{poly}} H_\theta
    :=
    F_k - q_d(\operatorname{ad}_{H_\theta})F_k\;,
\end{align}
where $\operatorname{ad}_{H_\theta}(F):=[H_\theta,F]$. 
The flat-exponential approximation of Ref.~\cite{bakshi2023learningquantumhamiltonianstemperature} provides stronger approximation guarantees than the naive Taylor truncation. However, our numerical experiments show that the naive Taylor truncation suffices.

For practical purposes, we use a simple Taylor series truncation of degree $d$:
\begin{align}
    \mathscr{D}^{(d)}_kH_\theta
    := \sum_{\ell=1}^{d}\frac{(-1)^{\ell+1}}{\ell!}\,[H_\theta,F_k]_\ell\;.
\end{align}
This is the simplest approximation; in our experiments with exact density matrices it achieves rapid error decay until numerical precision saturates around degree $d \sim 20$. Empirically, \emph{we find that the depth required to reach a relative precision of $\epsilon$ depends only on the local degree and interaction strength, and not on the system size}. For example, numerical evidence is presented in Fig.~\ref{fig:tfim_dstar_vs_L}, where we plot the depth $d^*$ versus chain length for the TFIM and find that it depends on the inverse temperature $\beta$ or the coupling strength, and is independent of $L$ once the chain contains the full support of the light cone of nested commutators. This also holds for the other models studied.

Overall, our scheme reduces learning to evaluating expectations of a finite set of operator monomials under $\rho \propto \e{-H'}$. The dominant cost is the number of distinct commutator monomials generated by $[H_\theta,F_k]_\ell$, which depends on locality and the algebra of the chosen operator basis.

\begin{figure*}[t]
    \centering
    \includegraphics[width = \textwidth]{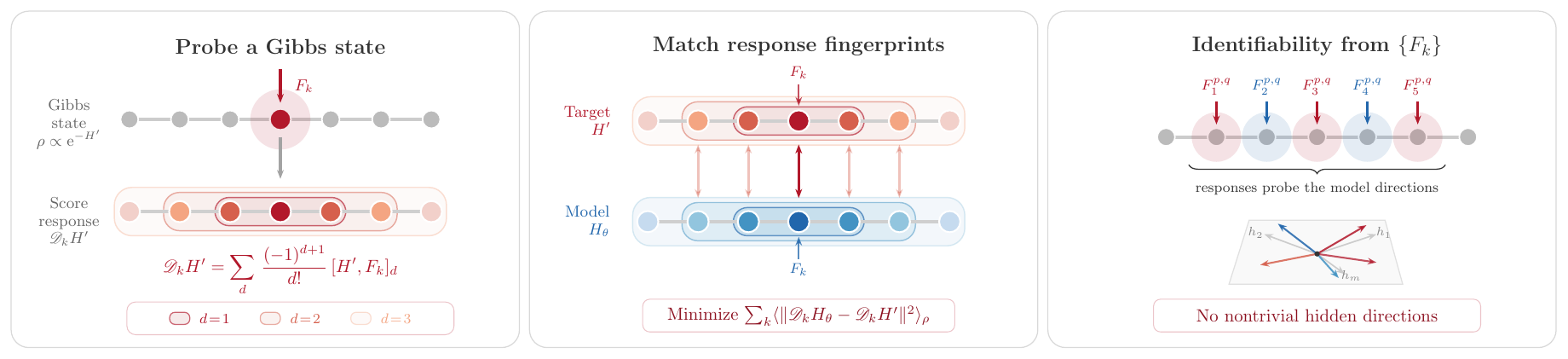}
    \caption{
    Schematic of quantum score matching through local probes $F_k$.
    \textbf{Left:} A local kick generated by an operator $F_k$ perturbs the Gibbs state $\rho$, producing a score response $\mathscr D_k H'$ that spreads around the lattice site with truncation depth $d$.
    \textbf{Middle:} Learning is performed by comparing the response fingerprints of the target Hamiltonian $H'$ and the model Hamiltonian $H_\theta$ under the same probes $F_k$ and minimizing the discrepancy.
    \textbf{Right:} Identifiability is controlled by probe coverage: when the set $\{F_k\}$ probes all nontrivial directions in the model class $\mathrm{span}\{h_i\}$, no hidden Hamiltonian direction remains invisible, and matching all responses determines $H_\theta$ up to an additive constant.}
\label{fig:learning_in_detail}
\end{figure*}

\subsection{Choosing Directions \texorpdfstring{$\{F_k\}$}{}} 
\label{sec:choose_Fk}
The choice of directions $\{F_k\}$ controls two competing
objectives:
 
\begin{enumerate}[label=(\roman*), wide, itemsep=2pt, labelindent=0pt]
  \item \textit{Completeness.} 
If the direction operators $\{F_k\}$ satisfy
\begin{align}
    \mathrm{Comm}(\{F_k\}) \cap \text{Alg}\{h_i\} \subseteq \mathrm{span}\{\mathds{1}\},
    \label{eq:ident_cond}
\end{align}
where $\text{Alg}\{h_i\}$ is the associative algebra generated by the operators $\{h_i\}$ and $\mathrm{Comm}(\{F_k\})$ is the set of operators commuting with $F_k$ for all $k$, then $\mathscr{D}_k H_\theta = \mathscr{D}_k H'$ for all $k$ implies $H_\theta - H' = c \cdot \mathds{1}$ for some constant $c$. This completeness condition states that multiples of the identity must be the only operators in the algebra generated by the Hamiltonian terms that also commute with all the direction operators. The full proof is given in Theorem~\ref{thm:identifiability} of Appendix~\ref{sec:identif_F}. This condition can obviously be satisfied by choosing the $F_k$ operators to generate the entire matrix algebra.
  
\item \textit{Computability.} Computation of the loss involves nested commutators of the form $[H_\theta, F_k]_\ell$. Hence, for the given operator basis ${h_k}$, the $F_k$ operators must be chosen so that these nested commutators are actually computable. For local Hamiltonians, these nested commutators spread over a region of radius $\sim\ell$ around $\mathrm{supp}(F_k)$.  Hence, it would be preferable to choose local $F_k$ operators in that case.
\end{enumerate}
 
\noindent We show that for any $n$-body system, a polynomial number of single-site direction operators
suffices to satisfy both conditions simultaneously.
We give the recipe in two cases set by particle statistics: distinguishable sites (spins, qubits, and bosons), where operators on different sites commute and each site is a tensor factor, and fermionic modes, where they anticommute. This completes the prescription for our approach, schematically depicted in Fig.~\ref{fig:learning_in_detail}.\\ 
\\
\textbf{Qudit and spin systems:} For an $L$-qubit system, we can choose single-qubit $Z \equiv \sigma^z$ and $X \equiv \sigma^x$ operators,
$F_k \in \{Z_1,X_1,\ldots,Z_L,X_L\}$, which generate the full $4^L$-dimensional
matrix algebra.



For a general $L$-site lattice with local Hilbert-space dimension $n$, the relevant object is the full single-site matrix algebra $M_n(\mathbb{C})$. For finite $n$, we therefore
use the generalized Pauli (clock--shift) operators $\hat C,\hat S$~%
\cite{Santhanam1976,Jagannathan1981,singh2020}, which satisfy
\begin{align}
    \hat S\hat C = \e{-2\pi\im/n}\,\hat C\hat S\;, \qquad
    \hat S^n = \hat C^n = \mathds{1}\;.
\end{align}
These operators generate $M_n(\mathbb{C})$ and reduce to Pauli matrices $Z,X$ for a local Hilbert space of dimension $n=2$, as shown in Appendix~\ref{sec:clock_shift}.

Another choice for these operators can be made by writing $\hat C = \e{\im a\,\hat p}$, $\hat S = \e{-\im b\,\hat q}$ with real $a,b$
satisfying $ab = 2\pi/n$, which yields finite-dimensional surrogates of the position and momentum operators $\hat p,\hat q$
that satisfy the canonical commutation relation in the limit of infinite local dimension,
\begin{align}
    \lim_{n\to\infty}[\hat p_{n,i},\hat q_{n,j}] = \im\,\delta_{ij}\;.
\end{align}
The direction operators are then chosen as
\begin{align}
    F_k \in \{\hat C_1,\hat S_1,\ldots,\hat C_L,\hat S_L\}\;,
    \label{eq:F_k_set}
\end{align}
(or the
$\{\hat p_i,\hat q_i\}$ surrogates), giving $2L$ directions with
$\mathrm{Comm}\{F_k\}=\mathbb{C}\mathds{1}$. Defining the commutant of a matrix $A$ as
\begin{align}
    \mathrm{Comm}(A) = \{X \in \mathbb{M}_n(\mathbb{C}):[A,X] = 0\}\;,
\end{align}
we find that the operators $F_k$ in Eq.~\eqref{eq:F_k_set} satisfy $\mathrm{Comm}\{F_k\}_k = \mathds{1}$, thus trivially satisfying the relation in Eq.~\eqref{eq:ident_cond}. However, one can choose the set of directions $\{F_k\}$ more judiciously and still satisfy the condition. Consider the example of the classical 1D Ising model,
\begin{align}
    H &= \sum_{i}J_{i,i+1}\sigma^z_i\sigma^z_{i+1} + \sum_i B_i\sigma^z_i\;,
\end{align}
for which ${\rm Alg}\{h_i\}$ lies entirely within the $\sigma^z$ subspace. Thus, in this case, one can choose the $F_k$ operators to be either $\sigma^x_j$ or $\sigma^y_j$ at each site; thus we need \emph{one} operator per site rather than two. In this sense, the choice in Eq.~\eqref{eq:F_k_set} can be too stringent in some cases, and one can make choices that are even more economical, based on the operator basis $\{h_i\}$.\\
\\
\textbf{Fermionic systems:} The same single-site philosophy extends to fermionic systems. By the Jordan--Wigner correspondence, the $2L_s$-mode Canonical Anticommuting Algebra is the
full matrix algebra $M_{2^{2L_s}}(\mathbb{C})$, so one \emph{can} always map a
fermionic problem to qubits and reuse the clock--shift (Pauli) directions above;
indeed, doing so recovers the Hamiltonian in our numerical tests. However, one can remain in the native fermionic formulation of the Hamiltonian and instead use the single-mode
Majorana operators
\begin{align}
    \gamma_{k,1} &= c_k^\dagger + c_k\;, \qquad k = 1,\dots,2L_s\ \notag\\
    \gamma_{k,2} &= \im(c_k^\dagger - c_k)\;,
\end{align}
where $c,c^\dagger$ are the local annihilation and creation operators, respectively, and the Majorana operators are Hermitian and obey the Clifford relation
$\{\gamma_{k,\alpha},\gamma_{l,\beta}\} = 2\delta_{kl}\delta_{\alpha\beta}$, the
anticommuting analog of the Weyl relation satisfied by $\hat C,\hat S$. The
$4L_s$ Majoranas generate the full Clifford algebra
$\mathrm{Cl}_{4L_s}(\mathbb{C}) \cong M_{2^{2L_s}}(\mathbb{C})$, so
$\mathrm{Comm}\{\gamma_{k,\alpha}\} = \mathbb{C}\mathds{1}$ and the completeness
condition~\eqref{eq:ident_cond} holds with no further assumptions, as shown in Appendix~\ref{sec:fermi_majorana}. Each $\gamma_{k,\alpha}$ is supported on a single fermionic mode,
where support is understood in the algebra generated by the local
creation and annihilation operators. 

To establish bounded support
growth, we additionally assume that
$H_\theta=\sum_X h_X(\theta)$ has finite interaction range $R$ and
that every local term has even fermion parity. An even fermionic
operator commutes with every operator on disjoint support,
including odd operators such as the Majorana
directions~\cite{nachtergaele2018fermions}. Consequently, each
nested commutator receives contributions only from Hamiltonian
terms overlapping the current support. It follows that
$[H_\theta,\gamma_{k,\alpha}]_\ell$ is supported within distance
$\ell R$ of mode $k$, as in the spin case.

These assumptions hold for the Fermi--Hubbard model considered
below, whose hopping, density, and interaction terms are all
even. Further details are given in
Appendix~\ref{sec:fermi_majorana}.

\subsection{Local PL inequality}\label{sec:pl_main}
The loss $\mathcal{L}_{\rm std}(\theta)=\sum_k \mathbb{E}|\mathscr{D}_kH_\theta-\mathscr{D}_kH'|^2$, despite having a single global minimum, is generally \emph{nonconvex} in $\theta$. In fact, by carefully tuning model parameters and the set of directions $\{F_k\}$, it is possible to construct explicit problem instances where the loss develops a local minimum. We provide the simplest example of this kind in Appendix~\ref{app:nonconvexity_example}. At the same time, with typical or random parameter initializations, we found that the loss often has a single minimum and is numerically easy to optimize.    

Near the global minimum, we show that a weaker condition, called the Polyak--\L{}ojasiewicz (PL) inequality, holds for the loss function in a neighborhood of $\theta^*$: 
\begin{align}
   \frac12 \| \nabla\mathcal{L}(\theta)\|^2 \ge   \mu\,(\mathcal{L}(\theta)-\mathcal{L}(\theta^*))\;,
\end{align}
for some $\mu>0$.
A detailed proof of the local PL inequality near $\theta^*$, including the Gram matrix lower bound and the local range condition, is given in Appendix~\ref{app:pl}.
For gradient descent with step size $\eta \leq 1/L$, where $L$ is the Lipschitz constant of $\nabla \mathcal{L}$, the PL inequality implies linear convergence \cite{polyak1963}:
\begin{align}
\mathcal{L}(\theta_k) - \mathcal{L}(\theta^*) \leq (1 - \eta \mu)^k \left(\mathcal{L}(\theta_0) - \mathcal{L}(\theta^*)\right).
\end{align}
The number of iterations to reach $\epsilon$-accuracy is $O(\mu^{-1} \log(1/\epsilon))$ \cite{karimi2016}. 
While our proof applies only in a local region around the minimum $\theta^*$, in numerical experiments with typical model instances, we find that the PL inequality holds for large radii around the minimum.


\subsection{Classical limits of the new loss}

\noindent \textbf{Application to discrete classical models:} For spin/qubit systems, with the direction operators chosen to be single-site Pauli operators, the loss function can also be written as 
\begin{align}
    \mathcal{L}_{\rm std}(\theta) &= 
    \sum_k \langle|F_k\e{-H_\theta}F_k \e{H_\theta} - F_k\e{-H'}F_k \e{H'}|^2\rangle_\rho \notag \\
    &\xrightarrow[]{\rm classical} \sum_k \left\langle\left|\frac{\e{-F_k H_\theta F_k}}{\e{-H_\theta}}-\frac{\e{-F_k H' F_k}}{\e{-H'}}\right|^2\right\rangle
\end{align}
which has the interpretation of matching the ratios of spin-flipped Hamiltonians due to the action of $F_k$ with the original Hamiltonian, in line with the philosophy of ratio-matching~\cite{hyvarinen2007}. In Appendix~\ref{app:std_loss}, Eq.~\eqref{eq:ratio_loss_app}, we show that this loss can be rewritten in the convenient form
\begin{align}
    \mathcal L_{\rm std}(\theta)
    &=
    \sum_k
    \left\langle
    \e{H_\theta}\e{-2F_kH_\theta F_k}\e{H_\theta}
    \right\rangle_\rho\notag\\
    &\qquad\qquad\quad-2\Re
    \sum_k
    \left\langle
    \e{F_kH_\theta F_k}\e{-H_\theta}
    \right\rangle_\rho .
    \label{eq:ratio_loss}
\end{align}
In particular, for the classical Ising model with nearest-neighbor couplings 
\begin{align}
    H_\theta &= -\sum_{u,v} J_{u,v}\sigma^z_u\sigma^z_{v} -\sum_u B_u\sigma^z_u\;,
\end{align}all operators
are diagonal in the $\sigma^z$ basis and commute, so the exponentials can be factored separately. Moreover, since the algebra generated by the Hamiltonian basis operators is diagonal, we can choose the direction operators to be just single-site $\sigma^x$ operators. Defining the local energy change under the spin flip at site $m$ ($F_m = \sigma^x_m$)
\begin{align}
    \Delta_m H_\theta
    &:=
    F_m H_\theta F_m - H_\theta\notag \\
    &=
    2\sigma^z_m\! \ (B_m + \sum_{v \neq m} J_{m,v} \sigma^z_v),
\end{align}
the loss \eqref{eq:ratio_loss} collapses to the compact local form
\begin{align}
    \mathcal{L}(\theta)
    &=
    \sum_m
    \langle
    \e{-2\Delta_m H_\theta}\rangle_\rho
    - 2\langle\e{\Delta_m H_\theta}
    \rangle_\rho,
    \label{eq:classical_Ising_limit}
\end{align}
where $\Delta_m H_\theta$ depends only on couplings involving sites connected to site $m$. If the underlying graph is known to be a lattice, the loss can be expressed in terms of local, few-body expectation values using a standard expansion of binary functions \cite{Shukla2025_LSFT,jayakumar2026computationally}. Each $\Delta_m H_\theta$ is linear in the parameters $\theta = (J,B)$, making the gradient straightforward to compute.

There are two interesting observations regarding \eqref{eq:classical_Ising_limit}. First, this loss function is new and different from previously known estimators for classical Ising models, including pseudolikelihood \cite{besag1975statistical} and interaction screening \cite{vuffray2016interaction, lokhov2018optimal}. Second, although the Ising model is discrete, this estimator is surprisingly derived using a method based on score matching, which normally only applies to distributions with continuous variables, by lifting the problem to the space of quantum models.\\
\\
\textbf{Application to continuous classical models:} The above loss function works for classical Hamiltonians as well. For example, in the classical continuous case, where $H(\mathbf{x}) = \sum_i \theta_i h_i(\mathbf{x})$ acts on configurations $\mathbf{x} = (x_1,x_2\cdots x_n)$, the quantum score reduces to the classical score in score matching. In particular, the action $F_k \circ \cdot = [F_k, \cdot]$ may be understood as the derivative operator $\partial_{{x}_k}\circ\cdot$. This can be made manifest in the case of the quantum $\hat{\phi}^4$ theory by taking $F_k = \hat{\pi}_k$, the conjugate momentum operator.  We have
\begin{align}
    [\hat{\pi}_k,\hat{\phi}_i^n] = \sum_{s=0}^{n-1}\hat{\phi}_i^{s}[\hat{\pi}_k,\hat{\phi}_i]\hat{\phi}_i^{n-s-1} = -\im n\hat{\phi}^{n-1}\delta_{ik}
\end{align}
which is exactly analogous to the simple classical identity $\partial_\phi \phi^n = n\phi^{n-1}$. In particular, we have
\begin{align}
    &\mathscr{D}_k (-\log(p_\theta)) = [F_k,\e{\log(p_\theta)}]\e{-\log(p_\theta)} = [F_k,\e{-H}]\e{H}\notag \\
    &[F_k,\e{-H}]\e{H} \mapsto \frac{\partial\e{-H}}{\partial{x}_k}\e{H} = -\frac{\partial H(\mathbf{x})}{\partial{x}_k} = \partial_k\log p(\mathbf{x})\;,
\end{align}
which makes explicit that $\mathscr{D}_k(-\log(p_\theta)) = \mathscr{D}_k H_\theta$ [Eq.~\eqref{eq:quantscore}] is the quantum analog of the \emph{classical} score $\nabla_k \log(p_\theta)$. Finally, the loss in the classical limit is
\begin{align}
    \mathcal{L}_{\rm std}(\theta)& = \sum_k\Tr(\rho |\mathscr{D}_kH_\theta - \mathscr{D}_kH'|^2) \notag \\
    &\xmapsto{\rm classical}\sum_k\int p(x) |\partial_k H_\theta - \partial_kH'|^2 dx\;,
\end{align}
and thus reduces exactly to the classical score matching objective, which is convex for classical models due to commutativity.

\section{RESULTS}
\subsection{Single-Qubit Case}
\begin{figure}[t]
    \centering
    \includegraphics[width=\linewidth]{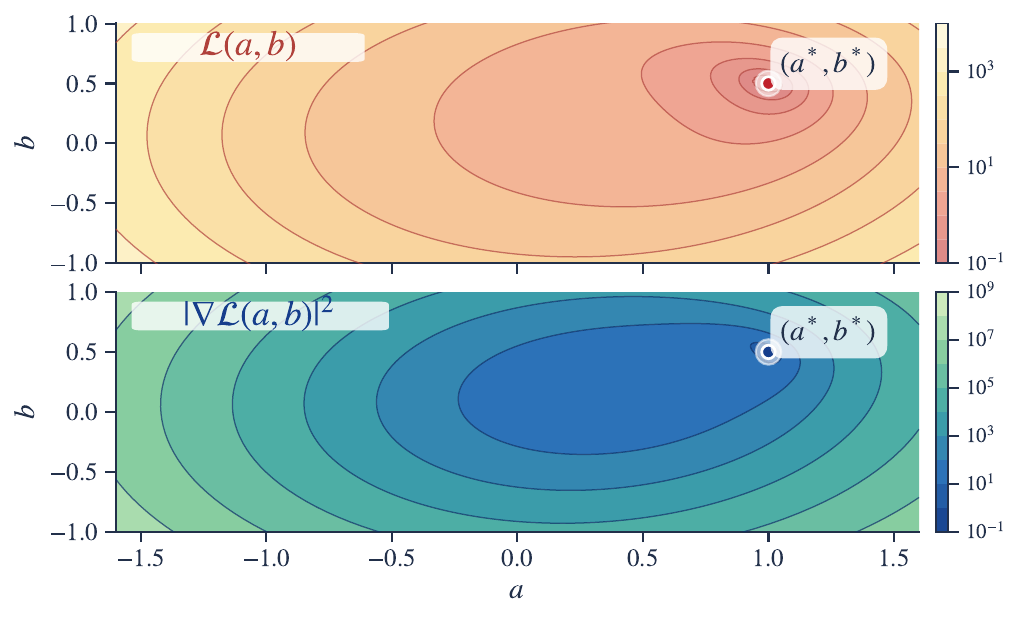}
    \caption{Loss landscape for a single qubit with $H = a\,\sigma^z + b\,\sigma^x$. \textbf{Top:} Contour plot of the loss $\mathcal{L}(a,b)$. \textbf{Bottom:} Contour plot of the norm squared of the gradient, $|\nabla \mathcal{L}(a,b)|^2$. As is evident, this loss has a unique minimum at the true couplings $(a^*,b^*) = (1.0,0.5)$.}
    \label{fig:single_qubit}
\end{figure}
It is instructive to demonstrate our method analytically in the simplest possible case of a single qubit with the 1d Hamiltonian
\begin{align}
    H_\theta = a\sigma^z+b\sigma^x\;.
\end{align}
In this case, choosing $F = \{\sigma^x,\sigma^z\}$ will give
the loss (using Eq.~\eqref{eq:ratio_loss})
\begin{align}
    \mathcal{L}(a,b) &= \mathbb{E}_\rho(\e{4H_\theta})-2\Re(\mathbb{E}_\rho(\e{-2H_\theta}))\notag \\
    &= (\cosh{4r}-2\cosh{2r})\notag \\
    &\qquad\qquad-\tau(\sinh{4r}+2\sinh{2r})
\end{align}
where we define 
\begin{align}
    r &= \sqrt{a^2+b^2}\notag \\
    \tau &= \tanh{r^*}\frac{aa^*+bb^*}{rr^*}\;,
\end{align}
where the asterisk $*$ denotes the true parameters. Working in polar coordinates $a = r\cos{\varphi}, b = r\sin{\varphi}$ and taking derivatives of the loss, we can see that
\begin{align}
    \partial_r\mathcal{L}(a,b) &= 8\cosh{3r}(\sinh{r}-\tau\cosh{r})\notag \\
    \partial_\varphi \mathcal{L}(a,b) &=  8\tanh{r^*}\sin(\varphi-\varphi^*)\sinh{r}\cosh^3{r}\;.
\end{align}
This loss has a unique minimum at $a=a^*$, $b=b^*$ (Fig.~\ref{fig:single_qubit}). It is seen here that the loss has only one minimum and that gradient descent has no trouble reaching this minimum.

\subsection{1D TFIM}
\begin{figure*}[t]
    \centering
    \begin{subfigure}[t]{0.48\textwidth}
        \centering
        \includegraphics[width=\linewidth]{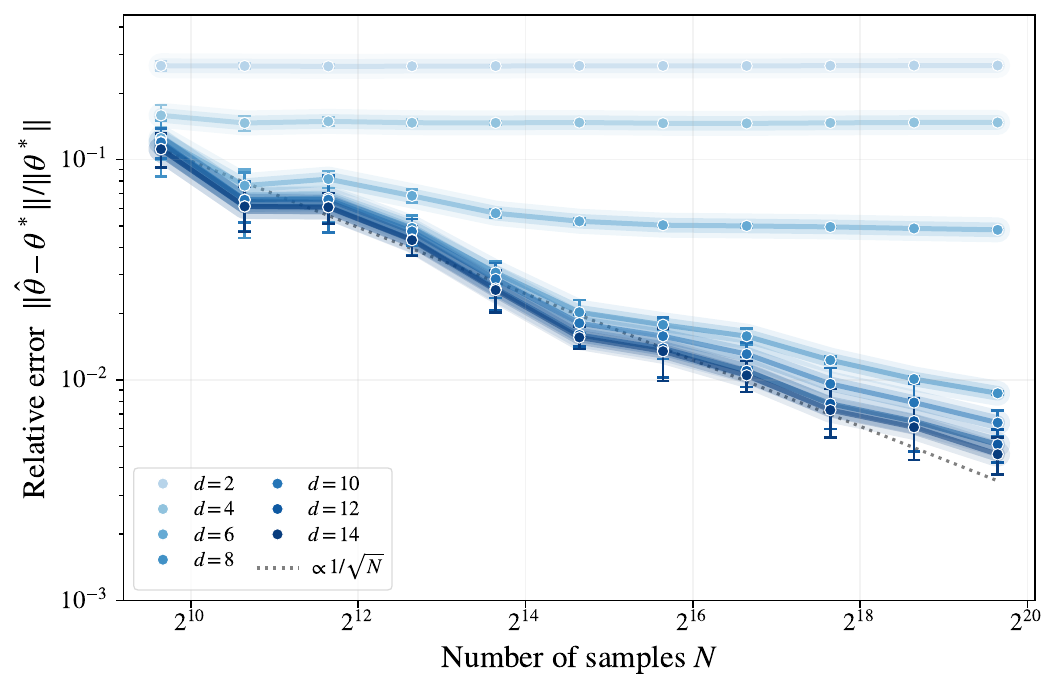}
        \caption{Relative errors for the $50$-qubit 1D non-uniform TFIM versus
        METTS sample size for various truncation depths $d$. As the depth
        increases, the sample size at which the error saturates increases.}
        \label{fig:nonuniform_tfim_trunc_err}
    \end{subfigure}\hfill%
    \begin{subfigure}[t]{0.48\textwidth}
        \centering
        \includegraphics[width=\linewidth]{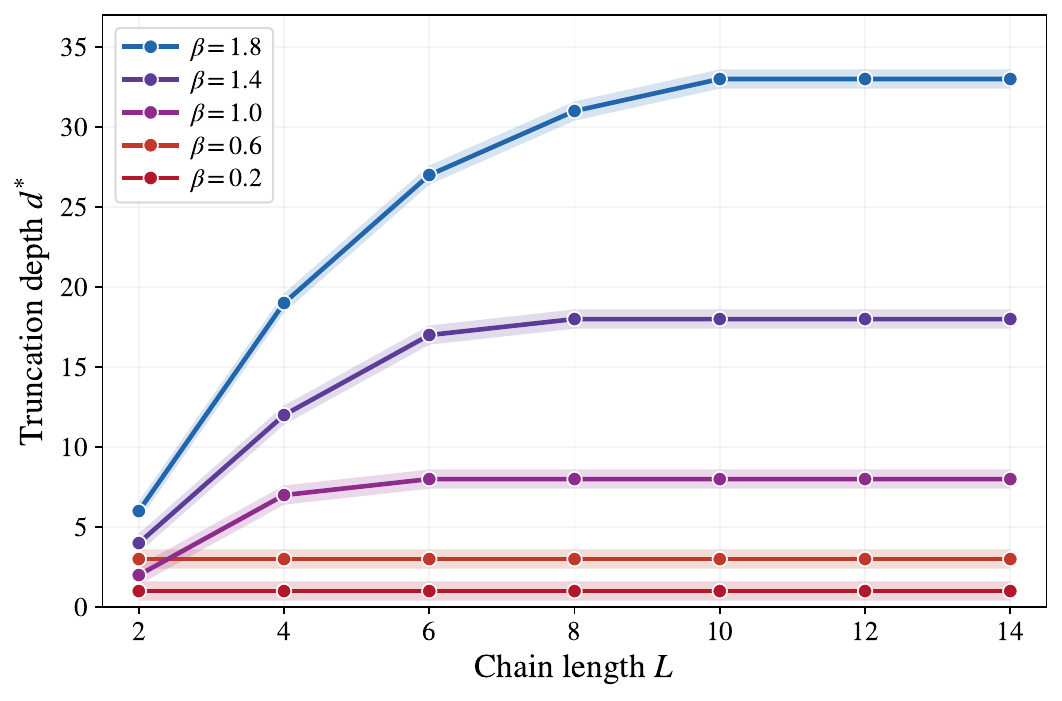}
        \caption{Minimal truncation depth $d^{*}$ for $10\%$ parameter recovery
        of the TFIM versus chain length $L$, for several inverse temperatures
        $\beta$. For each $\beta$, the asymptotic behavior of $d^{*}$ does not depend on $L$.}
        \label{fig:tfim_dstar_vs_L}
    \end{subfigure}%
    \caption{Truncation behavior of operator score matching for the 1D TFIM.}
    \label{fig:tfim_two_panel}
\end{figure*}
As a first multi-qubit example, we illustrate the method on the non-uniform 1D TFIM:
\begin{align}
    H = -\sum_i J_{i,i+1}\sigma^z_i\sigma^z_{i+1} -\sum_i B_i\sigma^z_i -\sum_i\Gamma_i\sigma^x_i\;.
\end{align}
Here, $\sigma^{z,x}_i$ are the Pauli matrices at site $i$ and the couplings are site-dependent: $J_{i,i+1}$ is the nearest-neighbor coupling, $h_i$ the longitudinal field, and $\Gamma_i$ the transverse field.
Choosing the derivative operator $F_k = \sigma^x_m,\sigma^z_m$ in the $x$ and $z$ directions at site $m$ yields
\begin{align}
    \mathscr{D}_{x,z}^{(d|m)}H &= [\sigma^{x,z}_m,\e{-H}]\e{H}\notag \\
    &\approx \sum_{k=1}^d \frac{(-1)^{k+1}}{k!}[H,\sigma^{x,z}_m]_k\;,
\end{align}
where, in the second line, we replace the full series by a degree-$d$ Taylor truncation. Each nested commutator $[H, \sigma^{x,z}_m]_\ell$ generates operators supported on a region of radius $\sim \ell$ around site $m$, reflecting the light-cone structure of information propagation. Since the $\sigma^{x},\sigma^z$ operators at each site are sufficient to span the complete matrix algebra, we see that all parameters of the model are effectively recovered by minimizing the loss function.

\paragraph*{Implementation:} For a chain of length $L = 50$, we work with a non-uniform TFIM model, with site-dependent couplings. Substituting into Eq.~\eqref{eq:D_kloss} and summing over all sites $m$ gives a degree-$2d$ loss function. We estimate the thermal expectation values entering the loss
in Eq.~\eqref{eq:D_kloss} using minimally entangled typical thermal
states (METTS)~\cite{white2009metts,stoudenmire2010metts} and minimize the resulting loss to recover the Hamiltonian parameters.

In Fig.~\ref{fig:nonuniform_tfim_trunc_err}, we show the total error in determining the site-dependent parameters $J_{ij},B_{i},\Gamma_i$ for various sample sizes and truncation depths of the BCH expansion. At small truncation depths, the error remains nearly flat even as the sample size increases, indicating that the dominant source of error is not statistical but systematic: the truncated loss is still a poor approximation to the true loss. As the depth is increased, this truncation error is reduced, and the behavior crosses over to the expected statistical regime, where the error decreases approximately as $1/\sqrt{N}$ with the number of samples $N$ before flattening out. This flattening is now due to noise: for large depths $d$, the statistical error from samples overwhelms any advantage due to increasing depth.

We thus find that for sufficiently large sample sizes and a sufficiently large truncation degree $d$, (i) the loss function recovers the correct parameters and (ii) small truncation degrees are sufficient to recover parameters in the physical parameter ranges studied. 

In Fig.~\ref{fig:tfim_dstar_vs_L}, we plot, for a fixed relative parameter
error $\epsilon = 0.1$, the truncation degree $d^*$ needed to recover
$\theta$ as a function of chain length $L$ for several inverse temperatures
$\beta$. The apparent growth of $d^*$ with $L$ at small sizes is a
finite-size artifact: the score operator is built from nested commutators
$\mathrm{ad}_H^{l} F$, whose support grows with the degree $l$, so on short
chains the support of these commutators reaches the boundary, and enlarging $L$ still enlarges
their effective support. Once $L$ exceeds the operator light cone at $\beta$, the commutators fit entirely within the
chain, adding sites no longer changes the error, and $d^*$ saturates.
Thus, $d^*$ is independent of $L$ and is fixed solely by the
strength and locality of the interaction (which we demonstrate by increasing $\beta$), confirming that the required truncation depth is a property of the local model rather than the total size of the system. We find that this qualitative behavior remains true for all models that follow. 

\begin{figure*}[!t]
    \centering
    \begin{minipage}[t]{0.48\textwidth}
        \centering
        \includegraphics[width=\linewidth]{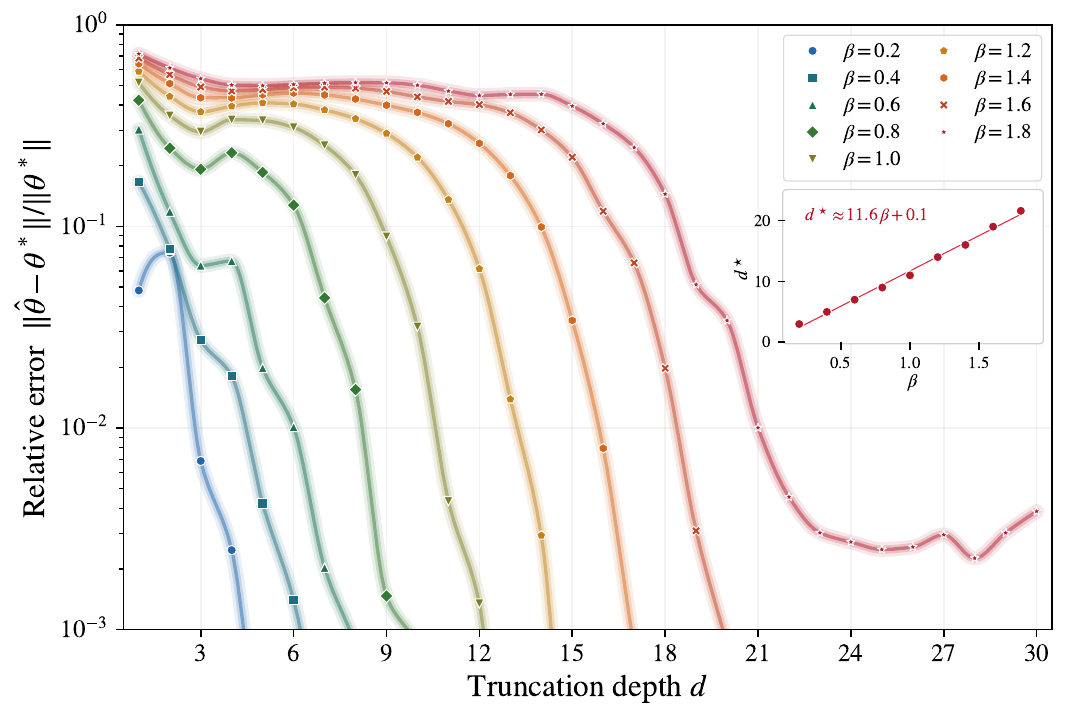}
        \caption{Mean relative errors versus depth $d$ for various values of
        $\beta$ for the XXZ model with $L=4$ sites, with an inset plot of the depth
        $d^*$ required to reach a relative error of $1\%$. We find that the
        required depth $d_{1\%} \propto \beta$.}
        \label{fig:xxz_beta_depth_sweep}
    \end{minipage}
    \hfill%
    \begin{minipage}[t]{0.48\textwidth}
        \centering
        \includegraphics[width=\linewidth]{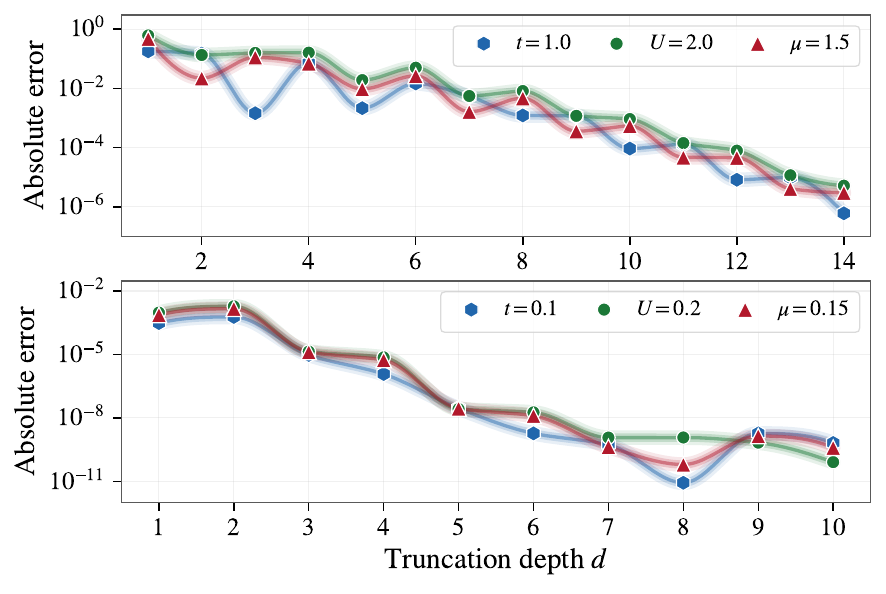}
        \caption{Absolute errors in the parameters of the 1D Fermi--Hubbard model with $L=2$ sites
        as a function of truncation degree $d$ for two choices of exact
        parameters differing by a factor of $10$.
        \begin{tabular}{@{}l@{\ }l@{}}
        \textbf{Top:}    & $(t,U,\mu)=(1.0,\,2.0,\,1.5)$ \\
        \textbf{Bottom:} & $(t,U,\mu)=(0.1,\,0.2,\,0.15)$.
        \end{tabular}}
        \label{fig:fh_trunc_err}
    \end{minipage}%
\end{figure*}
\subsection{1D XXZ Model}
We demonstrate the method on the 1D XXZ spin chain:
\begin{align}
    H &= -J_{xy}\sum_i (\sigma^x_i\sigma^x_{i+1} + \sigma^y_i\sigma^y_{i+1}) \notag \\
        &\qquad-J_z\sum_i \sigma^z_i\sigma^z_{i+1} 
        -B\sum_i \sigma^z_i\;.
\end{align}
The XXZ model interpolates between the isotropic Heisenberg model ($J_{xy} = J_z$) and the Ising model ($J_{xy} = 0$), making it an ideal testbed for validating parameter recovery across different physical regimes. Here, $J_{xy},J_z$ are nearest-neighbor couplings, and $B$ is the magnetic field. We choose single-site Pauli operators as the directions $F$. \\
\paragraph*{Implementation:} To generate synthetic training data with known true parameters $\theta^* = (J_{xy}^*, J_z^*, B^*)$, we prepare the finite-temperature Gibbs state $\rho(\theta^*) \propto \e{-H(\theta^*)}$ for $L=4$ sites. For fixed parameters $(J_{xy},J_z,B) = (0.8,0.3,0.2)$, we vary $\beta$ from $0.2$ to $1.8$ and measure the relative errors at each depth. We also record the minimum depth $d$ for each $\beta$ that results in a relative error of $1\%$, as shown in Fig.~\ref{fig:xxz_beta_depth_sweep}. We find that the required depth $d_{1\%} \propto \beta$, confirming that larger values of $\beta$ require greater depth to reach the required precision.

\subsection{1D Fermi--Hubbard Model}

To demonstrate the framework for fermionic systems, consider the Fermi--Hubbard model
\begin{align}
    H &= -t\sum_{\langle ij\rangle\sigma}(c_{i\sigma}^\dagger c_{j\sigma} + h.c.) + U\sum_i n_{i\uparrow}n_{i\downarrow} - \mu\sum_{i\sigma}n_{i\sigma}
\end{align}
with parameters $\theta = (t, U, \mu)$. Here, $c_{i\sigma}^\dagger$ and $c_{i\sigma}$ are fermionic creation and annihilation operators satisfying $\{c_{i\sigma}, c_{j\sigma'}^\dagger\} = \delta_{ij}\delta_{\sigma\sigma'}$, $t$ is the nearest-neighbor hopping amplitude, $U$ is the on-site Coulomb repulsion, $\mu$ is the chemical potential, and $n_{i\sigma} = c_{i\sigma}^\dagger c_{i\sigma}$ is the number operator. The Hamiltonian is linear in the parameters: $H_\theta = t\, h_{\text{hop}} + U\, h_{\text{int}} - \mu\, h_{\mu}$, where $h_{\text{hop}}$, $h_{\text{int}}$, and $h_\mu$ are fixed Hermitian operators.

For $L_s$ sites with two spin species, the Fock space has dimension $2^{2L_s}$, and all operators act as matrices on this space. For this model, we choose the direction operators $F_k$ to be the Majorana fermions
\begin{align}
    \gamma_{k,1} = c_k^\dagger + c_k\;, \qquad \gamma_{k,2} = i(c_k^\dagger - c_k)\;,
\end{align}
where $k$ ranges over all $2L_s$ fermionic modes (sites $\times$ spins). These are Hermitian and satisfy $\{\gamma_{k,\alpha}, \gamma_{l,\beta}\} = 2\delta_{kl}\delta_{\alpha\beta}$. The Majorana directions are the natural choice for fermionic systems: the commutator $[\gamma_{k,1}, H]$ probes how the Hamiltonian couples to single-particle excitations at mode $k$, analogous to how $[\sigma^y_m, H]$ probes spin fluctuations in the TFIM. For $2L_s$ modes, the $4L_s$ Majorana operators generate the full Clifford algebra $\mathrm{Cl}(4L_s, \mathbb{R}) \cong M_{2^{2L_s}}(\mathbb{C})$, so the identifiability condition is automatically satisfied: only multiples of the identity commute with all the operators $F_k$.

\paragraph*{Implementation.} For $L_s = 2$ (four fermionic modes), we have three parameters and eight Majorana direction operators. We construct the exact thermal state $\rho \propto \e{-H(\theta^*)}$ with true parameters $(t^*, U^*, \mu^*) = (1.0, 2.0, 1.5)$ and $(0.1, 0.2, 0.15)$ and minimize the truncated loss over~$\theta$. Figure~\ref{fig:fh_trunc_err} shows the parameter recovery errors as a function of truncation depth~$d$. All three parameters exhibit factorial error decay and reach errors of $\sim 10^{-6}$ even when the parameters are $O(1)$, despite the more complex operator algebra.


\subsection{1D Lattice \texorpdfstring{$\phi^4$}{} Theory}\label{sec:phi4}
To demonstrate applicability beyond spin systems, we consider lattice scalar field theory:
\begin{align}
    H = \sum_{i\in\Lambda}\left(\frac{1}{2}\pi_i^2 + \frac{m^2}{2}\phi_i^2
        + \frac{\lambda}{4!}\phi_i^4\right)
        + \kappa\sum_{\langle ij\rangle}(\phi_i-\phi_j)^2\;,
\end{align}
where $\phi_i$ and $\pi_i$ are conjugate field and momentum operators satisfying $[\phi_i, \pi_j] = i\delta_{ij}$. The parameters $\theta = (\kappa, m^2, \lambda)$ control the spatial correlation term, mass, and quartic self-interactions.

In the continuum limit, $\phi_i$ and $\pi_i$ act on infinite-dimensional Hilbert spaces and are not trace-class, invalidating the trace identities used in Sec.~\ref{sec:approx}. We therefore employ a Fock-space truncation: at each site, we represent the field operators in a harmonic oscillator basis
\begin{align}
    \phi_i = \frac{1}{\sqrt{2\omega}}(a_i + a_i^\dagger)\;, \qquad
    \pi_i = i\sqrt{\frac{\omega}{2}}(a_i^\dagger - a_i)\;,
\end{align}
with $a_i$ truncated to a $d_{\rm loc} \times d_{\rm loc}$ matrix (keeping at most $d_{\rm loc}-1$ excitations per site). The frequency $\omega$ is a variational parameter that can be optimized to minimize truncation effects.

\begin{figure}[!t]
        \includegraphics[width=\linewidth]{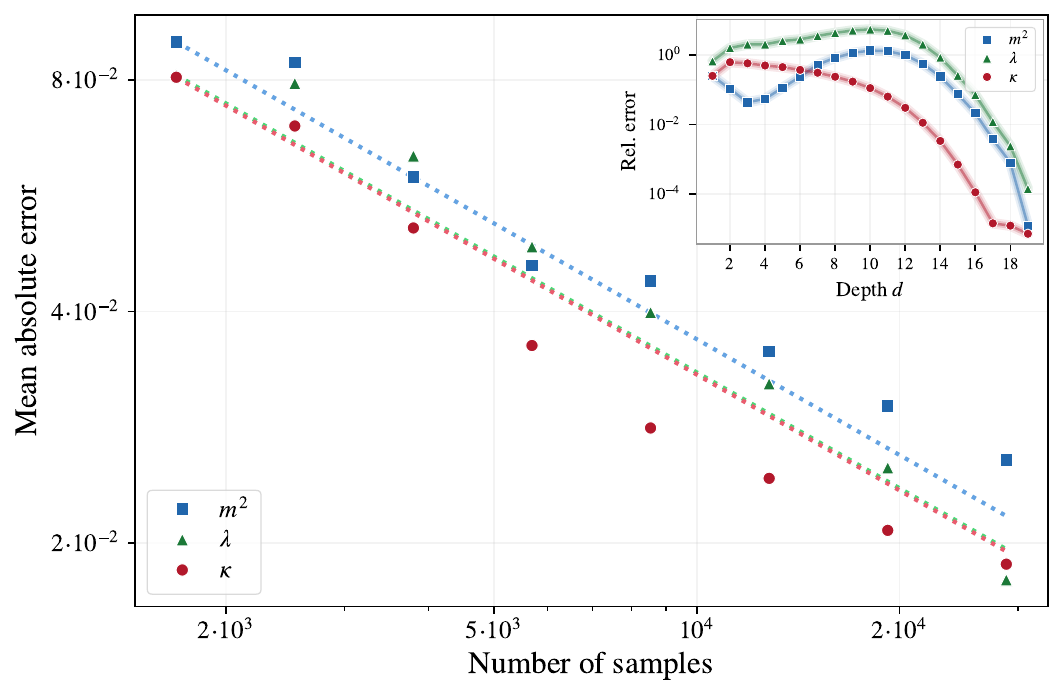}
        \caption{Lattice $\phi^4$ theory convergence: errors in the parameters $m,\kappa,\lambda$
        for the 1D $\phi^4$ model with $L=5$ sites versus the number of METTS samples. Inset: relative
        error in the parameters versus truncation depth $d$. In both cases, as the
        number of samples or the depth increases, the error decreases.}
        \label{fig:phi4_samples_err}
\end{figure}

We choose the directions $F_k = \phi_k,\pi_k$ (the field and momentum operators at each site). The score operator associated with $\phi_k$ is
\begin{align}
    \mathscr{D}_{\phi_k} H = [\phi_k, e^{-H}]e^{H}\;.
\end{align}
The commutator structure reflects the locality of the Hamiltonian: $[\phi_k, \pi_j^2] \propto \delta_{kj}$ and $[\phi_k, \phi_j^n] = 0$, so the score operator couples only to the momentum term at the same site and to gradient terms involving neighboring sites.
\paragraph*{Implementation.}
We work on $L=5$ sites with periodic boundaries and true parameters
$(m^{2*}, \lambda^*, \kappa^*,) = (0.4,0.5,0.3)$, representing the field
operators in a harmonic-oscillator basis. The harmonic-oscillator basis operators ($a,a^\dagger$) are infinite-dimensional, so we truncate them to matrices of dimension $n_{\rm loc}=8$. Using the directions
$F_k \in \{\phi_i,\pi_i\}$, we expand each score operator to depth $d$
[Eq.~\eqref{eq:Dk_series}] and assemble the loss [Eq.~\eqref{eq:D_kloss}]. The
main panel of Fig.~\ref{fig:phi4_samples_err} uses METTS~\cite{white2009metts,stoudenmire2010metts} samples of
$\rho \propto \e{-H(\theta^*)}$ for the full loss; the error follows the expected
$1/\sqrt{N_{\rm samp}}$ scaling. The
inset uses the exact $\rho$ and sweeps the truncation depth $d$, isolating the
systematic error.

We note that the local truncation dimension $n_{\rm loc}$ must be relatively large here: at small $n_{\rm loc}$, the
operators $\phi^2$ and $\phi^4$ are nearly collinear, making $m^2$ and $\lambda$ harder to probe. 

\section{DISCUSSION}\label{sec:discussion}
We have introduced a quantum (operator) score-matching framework for Hamiltonian learning
that avoids partition function computation by exploiting non-commutative
integration-by-parts identities. The global minimum of the novel loss recovers Hamiltonian parameters under mild algebraic conditions on the direction operators. We have provided specific prescriptions for choosing directions that satisfy these conditions and illustrated that the particular choice influences the loss landscape. Our approach complements existing methods based on Bayesian inference~\cite{wiebe2014}, local reduced density matrices~\cite{bairey2019}, and time evolution~\cite{anshu2021}, with the
distinction that it operates directly on thermal states without requiring
coherent dynamics.

The truncated score ${\mathscr{D}}^{(d)}_k H_\theta$ is a polynomial of
degree $d$ in $\theta$, making the loss a degree-$2d$ polynomial, which makes evaluation of the loss gradient tractable and scalable. We empirically observed that the depth $d$ required to reach a relative precision depends only on the local degree and interaction strength, and not on the system size. This polynomial structure connects naturally to the sum-of-squares hierarchy~\cite{barak_sos,FKP19} exploited by BLMT~\cite{bakshi2023learningquantumhamiltonianstemperature}, who solve a convex SoS relaxation with provable worst-case guarantees. Our approach instead uses gradient descent on the nonconvex loss, which is cheaper per iteration and numerically effective.
Across the numerical experiments reported in this paper, we observed that the loss often has a single minimum and is numerically easy to optimize.

We proved that the loss function has nice local convexity properties near the global minimum. A complete non-asymptotic error analysis of our method remains an open problem. Such an analysis would need to address the generally nonconvex loss and propagate both the BCH truncation error and the statistical error from finite samples to obtain end-to-end parameter-recovery and sample-complexity guarantees. We leave this analysis to future work.

The Fermi--Hubbard example confirms that the framework is agnostic to particle statistics: no Jordan--Wigner or other qubit mapping is required, and the same error scaling applies to spins, bosons, and fermions. Depending on the setting, the expectation values entering the loss can be estimated via tensor network methods, classical shadow tomography~\cite{Huang_2020}, or direct quantum measurements.

Several extensions and applications of our approach would be interesting to consider: continuum quantum field theories (where
trace identities require regularization), learning from dynamical
measurements, and applications to lattice gauge theories. We leave further analysis of these extended settings to future work.

\section*{DATA AND CODE AVAILABILITY}

All data and code used to generate the numerical results and figures presented in this paper are publicly available in the project’s \href{https://github.com/shreyaalkuhs/Learning-of-Quantum-Hamiltonians-via-Operator-Score-Matching}{GitHub repository} \cite{osm_code}.

\section*{ACKNOWLEDGMENTS}
We thank our colleagues Scott Lawrence, Yukari Yamauchi, Sidhant Misra, and Marc Vuffray at Los Alamos National Laboratory for helpful discussions. We acknowledge support from the U.S.~Department of Energy/Office of Science Advanced Scientific Computing Research Program. Los Alamos National Laboratory is operated by Triad National Security, LLC, for the National Nuclear Security Administration of U.S.~Department of Energy, under Contract No.~89233218CNA000001.

\bibliographystyle{plain}
\bibliography{ref_draft}

\onecolumngrid
\appendix

\medskip

\section{Standard-metric objective and trace integration by parts}
\label{app:std_loss}
\newlength{\oldparindent}%
\setlength{\oldparindent}{\parindent}
\setlength{\parindent}{0em}

This appendix derives the standard-metric score-matching loss used in the main text and shows how to rewrite it in a form that depends only on expectations under the data state, without explicit dependence on the partition function.

\medskip

\subsection{Setup and score map}
Let $\rho := \e{-H'}/Z'$ with $Z'=\Tr(\e{-H'})$. Define the (noncommutative) score map for a direction operator $F_k$ by
\begin{align}
    \mathscr{D}_k H \;:=\; [F_k,\e{-H}]\,\e{H}
    \;=\; F_k - \e{-H}F_k \e{H}.
    \label{eq:app_score_map}
\end{align}
Let $H_\theta=\sum_{i=1}^m \theta_i h_i$ be the model Hamiltonian and define the residual
\begin{align}
    \Delta_k(\theta) := \mathscr{D}_k H_\theta - \mathscr{D}_k H'.
\end{align}
We use the $\rho$-weighted Hilbert--Schmidt norm
\begin{align}
    \mathbb{E}_{\rho}(X) := \Tr(\rho\,X),
    \qquad
    \|A\|_{\mathrm{HS}}^2:=\Tr(A^\dagger A),
\end{align}
and define the standard-metric objective
\begin{align}
    \mathcal{L}_{\mathrm{std}}(\theta)
    := \sum_{k=1}^n \mathbb{E}_{\rho}\!\left(\Delta_k(\theta)^\dagger\Delta_k(\theta)\right).
    \label{eq:app_Lstd_def}
\end{align}
\subsubsection*{Special case}
Note that when $F_k$ is a Pauli matrix, $\norm{\Delta_k}^2 = \norm{F_k \Delta_k}^2$, and the loss can be written as 
\begin{align}
    \mathcal{L}_{\rm ratio}(\theta) &= \Tr(\rho \|\underbrace{F_k\e{-H_\theta}F_k\e{H_\theta}}_{= \mathcal{D}_kH_\theta}-F_k\e{-H'} F_k\e{H'}\|^2)
\end{align}
where we call this form of the loss function the ratio-matching loss, following its classical counterpart~\cite{hyvarinen2007}. We also define the new ``derivative'' as
\begin{align}
    \mathcal{D}_k H &= F_k\e{-H}F_k\e{H} = F_k(F_k-\mathscr{D}_k H) = \mathds{1}-F_k\mathscr{D}_kH\;.
\label{eq:newD_oldD}    
\end{align}

\medskip

\subsection{Trace integration-by-parts identity}

For trace-class operators $A,B$, one has $\Tr([A,B])=0$. In particular, for any trace-class $O$,
\begin{align}
    0 &= \Tr\!\left([F_k, O\,\e{-H'}]\right)
    = \Tr\!\left([F_k,O]\,\e{-H'}\right) + \Tr\!\left(O\,[F_k,\e{-H'}]\right).
\end{align}
Divide by $Z'$ to write this as a $\rho_{\mathrm{data}}$-expectation:
\begin{align}
    \Tr\!\left(O\,[F_k,\e{-H'}]\right)
    &= 
    - \Tr\!\left([F_k,O]\e{-H'}\right)\notag \\
    \implies \Tr\!\left(O\,[F_k,\e{-H'}]\e{H'}\frac{\e{-H'}}{Z'}\right)
    &= - \Tr\!\left([F_k,O]\frac{\e{-H'}}{Z'}\right)
    \label{eq:IBP}
\end{align}
Using $\mathscr{D}_k H' = [F_k,\e{-H'}]\e{H'}$ and $\e{-H'}=Z'\rho$ gives the useful form
\begin{align}
    \Tr\!\left(O\,\mathscr{D}_k H'\rho\right)
    &=  
    -\Tr\!\left([F_k,O]\rho\right)\notag \\
    \implies \mathbb{E}_\rho\!\left(O\,\mathscr{D}_k H'\right)
    &=
    -\mathbb{E}_\rho\!\left([F_k,O]\right)\;.
    \label{eq:app_IBP}
\end{align}

\medskip

\subsubsection{Special case}
One can also generalize this IBP identity to $\mathcal{D}_k H_\theta$. We can use Eq.~\eqref{eq:newD_oldD} to write for any trace-class operator $O$
\begin{align}
     \mathbb{E}_\rho(O\mathcal{D}_kH') &= \mathbb{E}_\rho(O) -\mathbb{E}_\rho(OF_k\mathscr{D}_kH')\;.
\end{align}
Now apply IBP from Eq.~\eqref{eq:app_IBP} to $O \rightarrow OF_k$
\begin{align}
    \mathbb{E}_\rho(O\mathcal{D}_kH') &= \mathbb{E}_\rho(O) + \mathbb{E}_\rho( [F_k,OF_k]) \notag \\
    \mathbb{E}_\rho(O\mathcal{D}_kH') &= \mathbb{E}_\rho(F_k OF_k)\;.
    \label{eq:specialIBP}
\end{align}

\medskip

\subsection{Loss rewrite in terms of model score only}
Expand \eqref{eq:app_Lstd_def}:
\begin{align}
    \mathcal{L}_{\mathrm{std}}(\theta)
    &= \sum_k \mathbb{E}_{\rho}\!\left(
    (\mathscr{D}_k H_\theta)^\dagger(\mathscr{D}_k H_\theta)\right)
    - 2\,\Re\,\sum_k \mathbb{E}_{\rho}\!\left(
    (\mathscr{D}_k H_\theta)^\dagger\,\mathscr{D}_k H'\right)
    + C,
    \label{eq:app_expand}
\end{align}
where $C:=\sum_k\mathbb{E}_{\rho}((\mathscr{D}_k H')^\dagger\mathscr{D}_kH')$ is independent of $\theta$. Apply \eqref{eq:app_IBP} with $O=(\mathscr{D}_k H_\theta)^\dagger$:
\begin{align}
    \mathbb{E}_{\rho}\!\left(
    (\mathscr{D}_k H_\theta)^\dagger\,\mathscr{D}_k H'\right)
    = - \mathbb{E}_{\rho}\!\left([F_k,(\mathscr{D}_k H_\theta)^\dagger]\right).
\end{align}
Using $\Re \langle X \rangle = \Re \langle X^\dagger \rangle$ and $[F_k,(\mathscr{D}_k H_\theta)^\dagger]^\dagger=-[F_k^\dagger,\mathscr{D}_k H_\theta]$ yields
\begin{align}
    \mathcal{L}_{\mathrm{std}}(\theta)
    &= \sum_k \mathbb{E}_{\rho}\!\left(
    (\mathscr{D}_k H_\theta)^\dagger(\mathscr{D}_k H_\theta)\right)
    - 2\,\Re\,\sum_k \mathbb{E}_{\rho}\!\left([F_k^\dagger,\mathscr{D}_k H_\theta]\right)\;,
    \label{eq:app_Lstd_rewrite}
\end{align}
where we have dropped the $\theta$-independent term. Equation \eqref{eq:app_Lstd_rewrite} is the main computational form: it depends on $\rho$ only through expectations and depends on the model only through $\mathscr{D}_kH_\theta$.

\medskip

\subsubsection{Special Case of Pauli Strings}
\label{app:ratio_loss_app}
Again, we expand the loss in the special case where the $F_k$ are Pauli strings, and we call this the ratio loss:
\begin{align}
    \mathcal{L}_{\rm ratio}(\theta) &= \mathbb{E}_\rho(\norm{\mathcal{D}_kH_\theta - \mathcal{D}_k H'}^2)\notag \\
    &= \mathbb{E}_\rho(\norm{\mathcal{D}_kH_\theta}^2) -2\Re(\mathbb{E}_\rho((\mathcal{D}_kH_\theta)^\dagger(\mathcal{D}_kH'))) + C'
\end{align}
where now $C' = \mathbb{E}_\rho(\norm{\mathcal{D}_kH'}^2)$ is a $\theta$-independent term. Using the special-case IBP identity in Eq.~\eqref{eq:specialIBP} for $O = (\mathcal{D}_kH_\theta)^\dagger$, we get 
\begin{align}
    \mathcal{L}_{\rm ratio}(\theta) &= \mathbb{E}_\rho(\norm{\mathcal{D}_kH_\theta}^2) - 2\Re(\mathbb{E}_\rho(F_k(\mathcal{D}_kH_\theta)^\dagger F_k))\notag \\
    &= \mathbb{E}_\rho(\e{H_\theta}F_k\e{-H_\theta}F_k^2\e{-H_\theta}F_k\e{H_\theta})-2\Re(\mathbb{E}_\rho(F_k\e{H_\theta}F_k\e{-H_\theta}F_k^2))\notag \\
    &= \mathbb{E}_\rho(\e{H_\theta}F_k\e{-2H_\theta}F_k\e{H_\theta})-2\Re(\mathbb{E}_\rho(F_k\e{H_\theta}F_k\e{-H_\theta}))\notag\\
    &= \mathbb{E}_\rho(\e{H_\theta}\e{-2F_kH_\theta F_k}\e{H_\theta})-2\Re(\mathbb{E}_\rho(\e{F_kH_\theta F_k}\e{-H_\theta}))\;.
    \label{eq:ratio_loss_app}
\end{align}

For a Hermitian unitary operator $F_k$ ($F_k^\dagger = F_k$, $F_k^2 = \mathds{1}$),
\eqref{eq:newD_oldD} gives $\mathcal{D}_k H = \mathds{1} - F_k\mathscr{D}_k H$,
so that
\begin{align}
    \mathcal{D}_kH_\theta - \mathcal{D}_kH' = -F_k\big(\mathscr{D}_kH_\theta - \mathscr{D}_kH'\big)\;.
\end{align}
Since $\|F_kX\|_{\rm HS}^2 = \Tr(X^\dagger F_k^\dagger F_k X) = \|X\|_{\rm HS}^2$,
the $\rho$-weighted Hilbert--Schmidt norms of the two residuals are identical
and therefore $\mathcal{L}_{\rm std}(\theta) = \mathcal{L}_{\rm ratio}(\theta)$.
The expressions \eqref{eq:app_Lstd_rewrite} and \eqref{eq:ratio_loss} are
two computational forms of the same loss, obtained by applying IBP to two
different forms of the same underlying conjugated direction
$\e{-H}F_k\e{H}$.

\medskip

\subsection{Nested-commutator series and truncation}
Using
\begin{align}
    \e{-H}F\,\e{H} = \sum_{\ell=0}^\infty \frac{(-1)^\ell}{\ell!}[H,F]_\ell,
\end{align}
we obtain
\begin{align}
    \mathscr{D}_k H
    = F_k - \e{-H}F_k \e{H}
    = \sum_{\ell=1}^{\infty}\frac{(-1)^{\ell+1}}{\ell!}\,[H,F_k]_\ell.
    \label{eq:app_D_series}
\end{align}
A depth-$d$ Taylor surrogate is
\begin{align}
    \mathscr{D}_k^{(d)} H
    := \sum_{\ell=1}^{d}\frac{(-1)^{\ell+1}}{\ell!}\,[H,F_k]_\ell.
    \label{eq:app_D_trunc}
\end{align}
Substituting $\mathscr{D}_k^{(d)}H_\theta$ into \eqref{eq:app_Lstd_rewrite} yields an approximation to the loss that is polynomial in $\theta$, with a degree that grows with $d$ and the operator algebra.

\medskip

\section{Choice of the directions $\{F_k\}$ in the loss}
\label{app:choice_of_directions}

\medskip

\subsection{Identifiability}
\label{sec:identif_F}

In this section, we study the minimal requirements on the set of directions 
$\{F_k\}$ needed to ensure identifiability. In Sec.~\ref{sec:gm_upd}, we adopted the strong assumption \eqref{assump:S3} that the directions generate the full matrix algebra. Although sufficient, this assumption is not necessary. We show that identifiability can be guaranteed by a smaller subset of directions $\{F_k\}$ indexed by a set $\mathcal{K} \ni k$, and we derive a sufficient condition for such minimal direction sets.
We begin with some definitions and identities.
\begin{definition}[Commutant]
\label{def:comm}
Let $\mathcal{A} \subseteq \mathcal{B}(\mathcal{H})$ be a set of operators on a Hilbert space $\mathcal{H}$. The \emph{commutant} of $\mathcal{A}$, denoted $\mathrm{Comm}(\mathcal{A})$, is
\begin{align}
    \mathrm{Comm}(\mathcal{A}) := \{X \in \mathcal{B}(\mathcal{H}) : [X, A] = 0 \;\; \forall A \in \mathcal{A}\}.
\end{align}
\end{definition}

\begin{definition}[Hamiltonian subspace and generated algebra]
\label{def:ham_subspace}
Given a parametrized Hamiltonian $H_\theta = \sum_{i=1}^m \theta_i h_i$ with
linearly independent operators $\{h_i\}_{i=1}^m$, define the Hamiltonian
subspace $\mathcal{S}$ and the unital associative algebra
$\mathrm{Alg}(\mathcal{S})$ it generates,
\begin{align}
    \mathcal{S} &:= \mathrm{span}_{\mathbb{R}}\{h_i\}_{i=1}^m \subset \mathcal{B}(\mathcal{H}),\notag\\
    \mathrm{Alg}(\{h_i\}) \equiv \mathrm{Alg}(\mathcal{S}) &:= \mathrm{span}_{\mathbb{C}}
    \big\{\, h_{i_1} h_{i_2}\cdots h_{i_p} \;:\; p \ge 0,\ i_j \in \{1,\dots,m\} \,\big\},
\end{align}
where the products are taken under ordinary operator multiplication and the
empty product ($p=0$) is the identity $\mathds{1}$. Thus $\mathrm{Alg}(\mathcal{S})$
is closed under multiplication and contains $\mathds{1}$.
\end{definition}
\begin{theorem}[Identifiability under minimal direction sets]
\label{thm:identifiability}
Let $H_\theta = \sum_{i=1}^m \theta_i h_i$ and $H' = \sum_{i=1}^m \theta_i^* h_i$ be Hamiltonians in the subspace $\mathcal{S} = \mathrm{span}\{h_i\}$. Suppose the direction operators $\{F_k\}_{k \in \mathcal{K}}$ satisfy
\begin{align}
    \mathrm{Comm}(\{F_k\}_{k \in \mathcal{K}}) \cap {\rm Alg}(\mathcal{S}) \subseteq \mathrm{span}\{\mathds{1}\}.
    \label{eq:suff_cond}
\end{align}
If $\mathscr{D}_k H_\theta = \mathscr{D}_k H'$ for all $k \in \mathcal{K}$, then $H_\theta - H' \in \mathrm{span}\{\mathds{1}\}$.
\end{theorem}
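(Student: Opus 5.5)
The plan is to reduce the hypothesis to a commutation statement, as in the main-text argument, and then use the fact that the relevant operator lies in $\mathrm{Alg}(\mathcal{S})$.

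\textbf{Step 1: reduce to a commutation relation.} By Eq.~\eqref{eq:app_score_map}, $\mathscr{D}_k H = F_k - \e{-H}F_k\e{H}$. So $\mathscr{D}_k H_\theta = \mathscr{D}_k H'$ is equivalent to
\begin{align}
\e{-H_\theta}F_k\e{H_\theta} = \e{-H'}F_k\e{H'} .
\end{align}
Multiplying on the left by $\e{H'}$ and on the right by $\e{-H_\theta}$ gives $YF_k = F_kY$ with $Y := \e{H'}\e{-H_\theta}$. Hence $Y \in \mathrm{Comm}(\{F_k\}_{k\in\mathcal{K}})$.

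\textbf{Step 2: show $Y \in \mathrm{Alg}(\mathcal{S})$.} I expect this to be the only step needing care. We work in finite dimension, the standing setting of the paper. Then $\mathrm{Alg}(\mathcal{S})$ is a finite-dimensional linear subspace of $\mathcal{B}(\mathcal{H})$, so it is topologically closed.

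The Hamiltonians $H_\theta$ and $H'$ lie in $\mathcal{S} \subset \mathrm{Alg}(\mathcal{S})$. Each partial sum of the exponential series for $\e{\pm H}$ is a polynomial in $H$, hence lies in the algebra. The series converges in norm, so by closedness $\e{H'}, \e{-H_\theta} \in \mathrm{Alg}(\mathcal{S})$. The algebra is closed under multiplication, so $Y \in \mathrm{Alg}(\mathcal{S})$.

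Alternatively, one can avoid limits altogether. By Cayley--Hamilton, or by Lagrange interpolation on the spectrum, each exponential equals an explicit polynomial in $H$. Combining Steps 1 and 2 with hypothesis~\eqref{eq:suff_cond} gives $Y = c\,\mathds{1}$ for some $c\in\mathbb{C}$.

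\textbf{Step 3: take the logarithm.} Since $Y$ is a product of invertible operators, $c \neq 0$, and $\e{-H_\theta} = c\,\e{-H'}$. The $h_i$ are Hermitian and $\theta,\theta^*$ are real, so both sides are positive definite. Comparing traces, or any expectation value, forces $c>0$.

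The map $H \mapsto \e{-H}$ is injective on Hermitian operators: the Hermitian logarithm of a positive definite operator is unique by the spectral theorem. Writing $c\,\e{-H'} = \e{-(H' - \log c\,\mathds{1})}$, we obtain $H_\theta = H' - (\log c)\,\mathds{1}$. Therefore $H_\theta - H' \in \mathrm{span}\{\mathds{1}\}$.

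The potential pitfalls are Step 2 in infinite dimension, where closedness of the algebra can fail, and the positivity argument in Step 3, which fixes the branch of the logarithm. Both are handled by finite dimension and Hermiticity.
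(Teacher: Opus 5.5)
Your proof is correct and follows the same route as the paper: you rewrite $\mathscr{D}_k H_\theta = \mathscr{D}_k H'$ as $[Y,F_k]=0$ with $Y=\e{H'}\e{-H_\theta}$, note that $Y\in\mathrm{Alg}(\mathcal{S})$, and apply the hypothesis to get $Y=c\,\mathds{1}$. Your Steps 2 and 3 justify two points the paper only asserts: that the exponentials lie in the finite-dimensional, hence closed, algebra, and that $c>0$ together with uniqueness of the Hermitian logarithm turns $Y=c\,\mathds{1}$ into $H_\theta - H' = -(\log c)\,\mathds{1}$.
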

\begin{proof}
Suppose $\mathscr{D}_k H_\theta = \mathscr{D}_k H'$ for all $k$. Expanding the score map:
\begin{align}
    F_k - \e{-H_\theta} F_k \e{H_\theta} &= F_k - \e{-H'} F_k \e{H'} \\
    \implies \e{-H_\theta} F_k \e{H_\theta} &= \e{-H'} F_k \e{H'} \\
    \implies \e{H'}\e{-H_\theta} F_k &= F_k \,\e{H'}\e{-H_\theta}.
\end{align}
Define $Y := \e{H'}\e{-H_\theta}$. Then $Y \in \mathrm{Comm}(\{F_k\}_{k \in \mathcal{K}})$ and $Y \in \text{Alg}(\mathcal{S})$. Therefore:
\begin{align}
    Y \in \mathrm{Comm}(\{F_k\}_{k \in \mathcal{K}}) \cap \mathrm{Alg}(\mathcal{S}) \subseteq \mathrm{span}\{\mathds{1}\}.
\end{align}
Thus $H_\theta - H' = c \cdot \mathds{1}$ for some $c \in \mathbb{R}$.
\end{proof}
First, note that if the operators $\{F_k\}$ generate the full algebra, we have $\text{Comm}\{F_k\} = \text{span}(\mathds{1})$ and the condition in Eq.~\ref{eq:suff_cond} is always satisfied. Second, the ambiguity $H_\theta - H' = c \cdot \mathds{1}$ is physically inconsequential: constant shifts to the Hamiltonian do not affect the thermal state. Indeed,
\begin{align}
    \rho = \frac{\e{-H'}}{\Tr(\e{-H'})} = \frac{\e{-H_\theta + c\mathds{1}}}{\Tr(\e{-H_\theta + c\mathds{1}})} = \frac{\e{c}\cdot\e{-H_\theta}}{\e{c}\cdot\Tr(\e{-H_\theta})} = \rho_\theta\;,
\end{align}
and $H_\theta$ and $H'$ define the same density matrix.

\medskip

\subsection{Generalized Pauli (clock and shift) operators}
\label{sec:clock_shift}

In this appendix, we collect the basic properties of the clock and shift operators
used to construct single-site direction operators $F_k$ for qubit and lattice
systems (Sec.~\ref{sec:choose_Fk}). On a single site with finite local Hilbert-space
dimension $n$ and computational basis $\{|0\rangle,\dots,|n-1\rangle\}$, define
the \emph{clock} operator $\hat C$ and \emph{shift} operator $\hat S$ by
\begin{align}
    \hat C\,|k\rangle = \omega^{k}\,|k\rangle\;, \qquad
    \hat S\,|k\rangle = |k+1 \!\!\mod n\rangle\;, \qquad
    \omega := \e{2\pi\im/n}\;.
\end{align}
Equivalently, $\hat C = \mathrm{diag}(1,\omega,\dots,\omega^{n-1})$ and $\hat S$
is the cyclic permutation matrix. Both are unitary and obey
\begin{align}
    \hat S^n = \hat C^n = \mathds{1}\;, \qquad
    \hat S\hat C = \omega^{-1}\,\hat C\hat S\;.
    \label{eq:app_weyl}
\end{align}
The $n^2$ operators $\{\hat C^a \hat S^b\}_{a,b=0}^{n-1}$ form the
Heisenberg--Weyl basis of $M_n(\mathbb{C})$: they are linearly independent and
span the full single-site matrix algebra. Consequently, $\hat C$ and $\hat S$
generate $M_n(\mathbb{C})$ irreducibly, and the only operators commuting with
both are multiples of the identity,
\begin{align}
    \mathrm{Comm}\{\hat C,\hat S\} = \mathbb{C}\,\mathds{1}\;,
    \label{eq:app_comm_CS}
\end{align}
which is precisely the completeness property required for identifiability.

\medskip
\noindent\textbf{Qubit case ($n=2$).}
Here $\omega = \e{\im\pi} = -1$, and the clock and shift operators reduce to the
familiar Pauli matrices,
\begin{align}
    \hat C = \begin{pmatrix} 1 & 0 \\ 0 & -1 \end{pmatrix} = \sigma^z\;, \qquad
    \hat S = \begin{pmatrix} 0 & 1 \\ 1 & 0 \end{pmatrix} = \sigma^x\;.
\end{align}
The relation \eqref{eq:app_weyl} becomes the anticommutation relation
$\hat S\hat C = -\hat C\hat S$, i.e.\ $\sigma^x\sigma^z = -\sigma^z\sigma^x$, and
the third Pauli operator is the product $\sigma^y = \im\,\sigma^x\sigma^z = \im\,\hat S\hat C$
rather than an independent generator. Thus the single-site Pauli directions
$\{\sigma^z,\sigma^x\}$ used in the qubit examples
(Secs.~\ref{sec:phi4}~ff.) are exactly the $n=2$ instance of the clock--shift
construction, and the $n^2 = 4$ Heisenberg--Weyl elements
$\{\mathds{1},\sigma^x,\sigma^z,\sigma^x\sigma^z\}$ are the qubit analog of the
general qudit basis $\{\hat C^a\hat S^b\}$.

\medskip
\noindent\textbf{Continuum limit.}
Writing $\hat C = \e{\im a\,\hat p}$ and $\hat S = \e{-\im b\,\hat q}$ with real
$a,b$ satisfying $ab = 2\pi/n$, the Weyl relation \eqref{eq:app_weyl} is the
finite-dimensional analog of the canonical commutator $[\hat p,\hat q] = \im$.
As $n\to\infty$ with the lattice spacing scaled appropriately, $\hat p$ and
$\hat q$ converge to conjugate position and momentum operators and
\begin{align}
    \lim_{n\to\infty}[\hat p_{n,i},\hat q_{n,j}] = \im\,\delta_{ij}\;,
\end{align}
recovering the continuum directions $F_k \in \{\hat\phi_i,\hat\pi_i\}$ used for
the lattice $\phi^4$ theory in Sec.~\ref{sec:phi4}.\\

\medskip

\subsection{Fermionic systems}
\label{sec:fermi_majorana}
For fermions, the single-site clock--shift construction is replaced by the
single-mode Majorana operators, which play the same role for the
\emph{anticommuting} (Clifford) algebra that $\hat C,\hat S$ play for the
\emph{commuting} (Weyl) algebra of a distinguishable site. For $N$ fermionic
modes with annihilation and creation operators $c_k,c_k^\dagger$ obeying the
canonical anticommutation relations
$\{c_k,c_l^\dagger\}=\delta_{kl}$, $\{c_k,c_l\}=0$, define the $2N$ Majoranas
\begin{align}
    \gamma_{k,1} = c_k^\dagger + c_k\;, \qquad
    \gamma_{k,2} = \im(c_k^\dagger - c_k)\;, \qquad k = 1,\dots,N\;.
\end{align}
After relabeling by a single index $a=1,\dots,2N$, these operators are Hermitian and obey
the Clifford relation
\begin{align}
    \{\gamma_a,\gamma_b\} = 2\,\delta_{ab}\,\mathds{1}\;,
    \label{eq:app_clifford}
\end{align}
the anticommuting analog of the Weyl relation \eqref{eq:app_weyl}. The Fock
space has dimension $2^{N}$.

We claim that, exactly as for the clock--shift operators, the Majoranas generate
the full matrix algebra and have a trivial commutant,
\begin{align}
    \mathrm{Comm}\{\gamma_a\}_{a=1}^{2N} = \mathbb{C}\,\mathds{1}\;,
    \label{eq:app_comm_maj}
\end{align}
so that the completeness condition required for identifiability
[Eq.~\eqref{eq:ident_cond}] holds with no further assumptions. The argument is
the Clifford analog of the Heisenberg--Weyl basis statement above.

For an ordered index set $S=\{a_1<a_2<\dots<a_r\}\subseteq\{1,\dots,2N\}$, define
the Majorana monomial
\begin{align}
    \Gamma_S := \gamma_{a_1}\gamma_{a_2}\cdots\gamma_{a_r}\;,
    \qquad \Gamma_\emptyset := \mathds{1}\;,
\end{align}
which gives $2^{2N}$ operators in total (one for each subset $S$). From
\eqref{eq:app_clifford} we have $\gamma_a^2 = \mathds{1}$, and distinct
Majoranas anticommute. Thus, any product $\Gamma_S$ with
$S\neq\emptyset$ is traceless: $\mathrm{Tr}\,\Gamma_S = 0$. Second, the product
$\Gamma_S^\dagger\Gamma_{S'}$ is again a Majorana monomial (up to a sign), and it
equals $\mathds{1}$ exactly when $S=S'$ and is traceless otherwise. Taking the
trace therefore gives
\begin{align}
    \mathrm{Tr}\!\left(\Gamma_S^\dagger\,\Gamma_{S'}\right)
    = 2^{N}\,\delta_{S,S'}\;,
    \label{eq:app_maj_orthog}
\end{align}
so the monomials are orthogonal. Thus the $2^{2N}$ monomials $\{\Gamma_S\}$ are linearly independent. As
$\dim M_{2^N}(\mathbb{C}) = (2^N)^2 = 2^{2N}$, they form a \emph{basis} of the
full matrix algebra on Fock space, in exact parallel with the $n^2$
Heisenberg--Weyl elements $\{\hat C^a\hat S^b\}$ spanning $M_n(\mathbb{C})$.
Consequently, the Majoranas generate $M_{2^N}(\mathbb{C})$ irreducibly.
\\

To obtain \eqref{eq:app_comm_maj}, suppose $X$ commutes with every $\gamma_a$.
Using 
$[X,\gamma_a\gamma_b]=[X,\gamma_a]\gamma_b+\gamma_a[X,\gamma_b]$, we find that $X$
commutes with every monomial $\Gamma_S$, hence with all of
$M_{2^N}(\mathbb{C})$ by the basis property. An operator commuting with the
entire matrix algebra is a scalar (Schur's lemma), so $X = c\,\mathds{1}$. This
establishes \eqref{eq:app_comm_maj}.
\\

In particular, the $4L_s$ Majoranas of the $L_s$-site Fermi--Hubbard model
($N=2L_s$ modes from sites $\times$ spins) generate
$\mathrm{Cl}(4L_s,\mathbb{R})\cong M_{2^{2L_s}}(\mathbb{C})$ and satisfy
$\mathrm{Comm}\{\gamma_{k,\alpha}\}=\mathbb{C}\mathds{1}$, which is the
completeness statement used in Sec.~\ref{sec:choose_Fk}. 

The construction requires no Jordan--Wigner mapping.
To establish locality, however, we must also specify the parity
of the Hamiltonian terms.

\paragraph*{Locality of the nested commutators.}
For fermions, the support of an operator refers to the modes
whose creation and annihilation operators are needed to
express it. In particular, $\gamma_{k,\alpha}$ is supported
on mode $k$.

We assume that every term $h_i$ in
$H_\theta=\sum_i\theta_i h_i$ has even fermion parity:
each $h_i$ is a sum of products containing an even number
of creation and annihilation operators. Although individual
fermion operators on distinct modes anticommute, moving an
operator past an even number of such factors cancels the
minus signs. Consequently, an even Hamiltonian term commutes
with every operator on disjoint support:
\begin{align}
    [h_i,A]=0
    \qquad\text{if}\qquad
    \mathrm{supp}(h_i)\cap\mathrm{supp}(A)=\varnothing.
\end{align}
This also holds when $A$ is odd, as for our Majorana
directions~\cite{nachtergaele2018fermions}.

Thus, the first commutator is
\begin{align}
    [H_\theta,\gamma_{k,\alpha}]
    =\sum_{i:\,k\in\mathrm{supp}(h_i)}
      \theta_i[h_i,\gamma_{k,\alpha}],
\end{align}
so only Hamiltonian terms involving mode $k$ contribute.
Each resulting term is supported on the modes appearing
in that Hamiltonian term. At the next commutator, only
Hamiltonian terms overlapping this new support can
contribute. The same argument applies at every subsequent
order.

If the interaction range is $R$, meaning that no two sites
in the support of a local Hamiltonian term are separated
by more than $R$, each commutator can extend the support
by at most this distance. Therefore,
\begin{align}
    \mathrm{supp}\bigl([H_\theta,\gamma_{k,\alpha}]_\ell\bigr)
    \subseteq \{j:\mathrm{dist}(j,k)\leq\ell R\}.
\end{align}
Here, distance is measured between the sites carrying the
modes; modes on the same site have zero spatial separation.
This gives the same bounded support growth as in the spin case.
The hopping, density, and interaction terms of the
Fermi--Hubbard model all satisfy the even-parity assumption.

If odd Hamiltonian terms are allowed, this argument can fail.
For example, take $H=\lambda\gamma_{j,1}$ with real
$\lambda\neq0$. For any distinct mode $k$,
\begin{align}
    [H,\gamma_{k,1}]
    =2\lambda\gamma_{j,1}\gamma_{k,1}\neq0,
\end{align}
regardless of the distance between $j$ and $k$.
Thus, bounded spatial support growth follows from the
locality and even parity of the Hamiltonian terms;
working directly in Fock space alone is insufficient.

\medskip

\section{Explicit example of nonconvexity of the loss}
\label{app:nonconvexity_example}

The completeness condition on the direction operators ensures that the true Hamiltonian is the unique global minimizer, up to an additive constant. However, does this also imply that the loss is easy to minimize from any initial point? We show with a two-qubit example that the answer is no: even when the direction operators generate the full matrix algebra, the loss can have a local minimum at which the Hamiltonian parameters differ from their true values.

Consider the parametrized Hamiltonian
\begin{align}
    H_\theta &= \theta_1 h_1+\theta_2 h_2, \\
    h_1 &= 0.19\,I\otimes Z-0.81\,X\otimes Z, \\
    h_2 &= 0.73\,I\otimes Z+0.17\,X\otimes Z
    -0.21\,Y\otimes I,
    \label{eq:nonconvex_hamiltonian}
\end{align}
where $X,Y,Z$ are the Pauli matrices and $I$ is the single-qubit identity. We choose the target parameters to be
\begin{align}
    \theta^*=(-4,-3.5),\qquad
    \rho=\frac{e^{-H_{\theta^*}}}{\Tr(e^{-H_{\theta^*}})},
\end{align}
with the inverse temperature absorbed into the Hamiltonian, as in the main text. For the direction operators, we take
\begin{align}
    \{F_k\}_{k=1}^4
    =\{X\otimes I,Z\otimes I,I\otimes X,I\otimes Z\}.
\end{align}
These operators generate $M_4(\mathbb C)$, and their common commutant therefore contains only multiples of the identity. Moreover, $h_1$ and $h_2$ are linearly independent and traceless, so there is no additive-constant ambiguity within this parametrization. Thus, the loss vanishes only at $\theta=\theta^*$.

To examine the loss away from this point, we evaluate the matrix exponentials directly, without truncating the nested-commutator expansion. Writing
\begin{align}
    R_k(\theta)&:=e^{-H_\theta}F_k e^{H_\theta}, \\
    \Delta_k(\theta)&:=R_k(\theta)-R_k(\theta^*),
\end{align}
the loss in Eq.~\eqref{eq:Lstd_def} becomes
\begin{align}
    \mathcal L_{\rm osm}(\theta)
    =\sum_{k=1}^4\Tr\!\left(
    \rho\,\Delta_k(\theta)^\dagger\Delta_k(\theta)
    \right).
    \label{eq:nonconvex_exact_loss}
\end{align}
The sign difference between $\Delta_k$ and the score residual does not affect this expression, since $\mathscr D_kH_\theta=F_k-R_k(\theta)$.

Solving the stationarity equations numerically gives a second minimum at
\begin{align}
    \theta_{\rm m}&\simeq(-1.44731998,-5.28849816), \\
    \mathcal L_{\rm osm}(\theta_{\rm m})
    &\simeq 1.598554732\times10^5.
\end{align}
At the unrounded numerical solution, the gradient norm is below $10^{-8}$. The Hessian eigenvalues are
\begin{align}
    \lambda\!\left(\nabla^2\mathcal L_{\rm osm}(\theta_{\rm m})\right)
    \simeq\{2.17745\times10^4,\,3.51735\times10^5\}.
\end{align}
Both eigenvalues are positive, providing numerical evidence for a strict local minimum. Since the loss is positive here, while $\mathcal L_{\rm osm}(\theta^*)=0$, this minimum is spurious. The same example also has a saddle point at
\begin{align}
    \theta_{\rm s}&\simeq(-1.58465053,-4.93358407), \\
    \mathcal L_{\rm osm}(\theta_{\rm s})
    &\simeq1.603146581\times10^5,
\end{align}
with Hessian eigenvalues approximately $-1.56466\times10^4$ and $1.29538\times10^5$. We computed the gradients using Fr\'echet derivatives of the matrix exponential and checked the Hessian eigenvalues by central differences of these gradients with step sizes from $10^{-4}$ to $10^{-6}$.

This example separates two properties of the learning problem. Completeness ensures that matching all the operator scores determines the true Hamiltonian, but it does not rule out stationary points with nonzero score residuals. In particular, a positive-loss stationary point is incompatible with a global PL inequality with a strictly positive constant. This does not contradict the local PL inequality established in Appendix~\ref{app:pl}, which applies in a neighborhood of the true parameters. The obstruction in this example is already present in the exact population loss, and therefore does not arise from finite-sample errors or truncation of the score expansion.

\medskip

\medskip

\section{Local PL inequality for the standard-metric loss}
\label{app:pl}
This appendix provides a self-contained proof that $\mathcal{L}_{\mathrm{std}}$ satisfies a
\emph{local} Polyak--\L{}ojasiewicz (PL) inequality near a realizable minimizer under natural
algebraic assumptions. We assume that the Hamiltonian is linear in the parameters
\begin{align}
    H_\theta &= \sum_{i=1}^m \theta_i h_i .
\end{align}

\medskip

\subsection{Real Hilbert structure and residual map}

Let $\mathcal{H}:=\mathbb{C}^{d\times d}$ and regard it as a real Hilbert space equipped with the
$\rho$-weighted inner product
\begin{align}
    (A,B)_{\rho} := \Re\,\Tr(\rho\,A^\dagger B),\qquad
    \|A\|_{\rho}:=\sqrt{(A,A)_{\rho}},
\end{align}
where $\rho=\e{-H'}/Z'$ is the data state and is full rank.
For $n$ directions, define the product space $\mathcal{H}^n$ with
\begin{align}
    (r,s)_{\rho,n} := \sum_{k=1}^n (r_k,s_k)_{\rho},
    \qquad
    \|r\|_{\rho,n}^2=(r,r)_{\rho,n}.
\end{align}

Fix a realizable target $H':=H_{\theta^\ast}$ and define the residual
\begin{align}
    r(\theta) := (\Delta_1(\theta),\dots,\Delta_n(\theta))\in\mathcal{H}^n,
    \qquad
    \mathcal{L}_{\mathrm{std}}(\theta):=\|r(\theta)\|_{\rho,n}^2 .
\end{align}
Then $\mathcal{L}_{\mathrm{std}}(\theta^\ast)=0$.

\medskip

\subsection{Jacobian, adjoint, and gradient formula}
Assume $r$ is Fr\'echet differentiable. Define the Jacobian
$J(\theta):\mathbb{R}^m\to\mathcal{H}^n$ by
\begin{align}
    (J(\theta)u)_k &:= \sum_{i=1}^m u_i\,\partial_{\theta_i}\Delta_k(\theta),\notag\\
    J(\theta)u &:= Dr(\theta)[u].
\end{align}
Let $J(\theta)^\top:\mathcal{H}^n\to\mathbb{R}^m$ denote the real adjoint with respect to the
inner product $(\cdot,\cdot)_{\rho,n}$, i.e.
\begin{align}
    (J(\theta)u,y)_{\rho,n} = (u,J(\theta)^\top y)_{\mathbb{R}^m}.
\end{align}

The directional derivative of $\mathcal{L}_{\mathrm{std}}$ along $u\in\mathbb{R}^m$ is
\begin{align}
    \frac{d}{dt}\bigg|_{t=0}\mathcal{L}_{\mathrm{std}}(\theta+tu)
    &= 2\left(r(\theta),\,\frac{d}{dt}\bigg|_{t=0}r(\theta+tu)\right)_{\rho,n} \notag\\
    &= 2(J(\theta)u,r(\theta))_{\rho,n} \notag\\
    &= 2(J(\theta)^\top r(\theta),u)_{\mathbb{R}^m}.
\end{align}
Therefore,
\begin{align}
    \nabla\mathcal{L}_{\mathrm{std}}(\theta)
    = 2\,J(\theta)^\top r(\theta).
    \label{eq:app_grad_rho}
\end{align}

\medskip

\subsection{Derivative of the score map}

We use the Fr\'echet derivative identity for the matrix exponential:
\begin{align}
    \frac{d}{dx}\e{C(x)}
    = \int_0^1 \e{(1-s)C(x)}\,C'(x)\,\e{sC(x)}\,ds .
\end{align}
Define the time-averaged conjugation of $h_i$ by $H_\theta$:
\begin{align}
    v_i(\theta) := \int_0^1 \e{sH_\theta}\,h_i\,\e{-sH_\theta}\,ds .
\end{align}
Then we have
\begin{align}
    \partial_{\theta_i}\Delta_k(\theta) &= \partial_{\theta_i}(\mathscr{D}_kH_\theta) = \partial_{\theta_i}([F_k,\e{-H_\theta}]\e{H_\theta})\notag \\
    &= -\partial_{\theta_i}(\e{-H_\theta}F_k\,\e{H_\theta})\notag \\
    &= - \partial_{\theta_i}(\e{-H_\theta})F_k\,\e{H_\theta}-\e{-H_\theta}F_k\,\partial_{\theta_i}(\e{H_\theta})\;.
    \label{eq:di_deltak}
\end{align}
We now use
\begin{align}
     \partial_{\theta_i}(\e{-H_\theta}) &= -\int_0^1 \e{-(1-s)H_\theta}(\partial_{\theta_i}H_\theta)\e{-sH_\theta}\,ds\notag \\
     &= -\int_0^1 \e{-(1-s)H_\theta}h_i\e{-sH_\theta}\,ds\notag \\
     &= -\e{-H_\theta}v_i(\theta)\;.
\end{align}
Similarly, using $\partial_{\theta_i}(\e{H_\theta}\e{-H_\theta}) = 0$, it is easy to show that
\begin{align}
    \partial_{\theta_i}(\e{H_\theta}) &= v_i(\theta)\e{H_\theta}\;.
\end{align}
Substituting these relations into Eq.~\eqref{eq:di_deltak}, we get
\begin{align}
    \partial_{\theta_i}\Delta_k(\theta)
    = \e{-H_\theta}\,[v_i(\theta),F_k]\,\e{H_\theta}.
    \label{eq:app_dDelta_rho}
\end{align}

\medskip

\subsection{Gram matrix and uniform positive definiteness}
\label{sec:gm_upd}

Define the Gram matrix $G(\theta)\in\mathbb{R}^{m\times m}$ by
\begin{align}
    G_{ij}(\theta)
    &:= (\partial_{\theta_i}r(\theta),\partial_{\theta_j}r(\theta))_{\rho,n} \notag\\
    &= \sum_{k=1}^n
       \Re\,\Tr\!\left(\rho\,
       (\partial_{\theta_i}\Delta_k(\theta))^\dagger
       \partial_{\theta_j}\Delta_k(\theta)\right).
    \label{eq:app_Gram_rho}
\end{align}
Then for all $u\in\mathbb{R}^m$,
\begin{align}
    \|J(\theta)u\|_{\rho,n}^2 = u^\top G(\theta)u,
    \qquad
    \sigma_{\min}(J(\theta))^2 = \lambda_{\min}(G(\theta)).
\end{align}

Assume:
\begin{enumerate}[label=(S\arabic*), ref=S\arabic*]
    \item \label{assump:S1} The operators $\{h_i\}_{i=1}^m$ are linearly independent over $\mathbb{R}$.
    \item \label{assump:S2} The operators $\{h_i\}_{i=1}^m$ are traceless.
    \item \label{assump:S3} The operators $\{F_k\}$ generate the full matrix algebra, so that
    $[X,F_k]=0\ \forall k \Rightarrow X=c\,\mathds{1}$.
    \item \label{assump:S4} $\Omega\subset\mathbb{R}^m$ is compact and contains $\theta^\ast$.
\end{enumerate}

\paragraph*{Uniform Gram bound.}
Under~\cref{assump:S1,assump:S2,assump:S3,assump:S4} and $\rho\succ0$, one has $G(\theta)\succ0$ for all $\theta\in\Omega$ and
\begin{align}
    \mathfrak{a}^2 := \inf_{\theta\in\Omega}\lambda_{\min}(G(\theta)) > 0 .
\end{align}
\begin{proof}
Fix $\theta\in\Omega$ and suppose $u^\top G(\theta)u=0$.
Then $\|J(\theta)u\|_{\rho,n}=0$, hence each component
$\sum_i u_i\,\partial_{\theta_i}\Delta_k(\theta)$ vanishes as an operator because
$\rho\succ0$.
Using \eqref{eq:app_dDelta_rho} and the invertibility of conjugation by $\e{\pm H_\theta}$ gives
\begin{align}
    \left[\sum_{i=1}^m u_i v_i(\theta),\,F_k\right] = 0 \qquad \forall k.
\end{align}
By \cref{assump:S3}, $\sum_i u_i v_i(\theta)=c\,\mathds{1}$ for some $c\in\mathbb{R}$.
Let $X_u:=\sum_i u_i h_i$. Then
\begin{align}
    \sum_i u_i v_i(\theta)
    = \int_0^1 \e{sH_\theta}X_u \e{-sH_\theta}\,ds =c\cdot\mathds{1}.
    \label{eq:Xu_c1}
\end{align}

Since the operators $h_i$ are traceless, $X_u$ is also traceless. Moreover, $\text{Tr}[ \int_0^1 \e{sH_\theta}X_u \e{-sH_\theta}\,ds] =  \int_0^1  \text{Tr}[\e{sH_\theta}X_u \e{-sH_\theta}\,] ~ ds = 0$. Thus $c = 0$ and   $\int_0^1 \e{sH_\theta}X_u \e{-sH_\theta}\,ds = 0$. 
Now (for fixed $\theta$) consider this quantity in an eigenbasis of $H_\theta$, $H_\theta|m\rangle  = E_m|m\rangle$:
\begin{align}
    \left(\int_0^1 \e{sH_\theta}X_u \e{-sH_\theta}\,ds\right)_{mn} &= (X_u)_{mn}\int_0^1\e{s(E_m-E_n)}\,ds\notag \\
    &= (X_u)_{mn}\cdot g(E_m-E_n)\;,
\end{align}
where we define 
\begin{align}
    g(x) := \frac{\e{x}-1}{x}\;.
\end{align}
Note that $g(x) \neq 0$ for all finite $x$. Hence, every matrix element of $X_u$ in this basis vanishes, implying that $X_u = 0$. Since we assumed the linear independence of the operators $h_i$, this in turn implies that $u = 0$.\\

This shows that $u^\top G (\theta) u = 0 \ \implies \ u = 0 $. We know from the definition of $G$ that $u^\top G (\theta) u$ is non-negative and therefore $u \neq 0 \implies u G u^\top > 0$.


Thus $G(\theta)\succ0$ for all $\theta$.
Continuity of $G$ and the compactness assumption \cref{assump:S4} imply a uniform lower bound.

\end{proof}

\medskip

\subsection{Projection--singular value inequality}
Fix $\theta$ and abbreviate $J := J(\theta)$.
We view
\begin{align}
J : \mathbb{R}^m \longrightarrow \mathcal{H}^n
\end{align}
as a bounded linear operator between Hilbert spaces
\begin{align}
(\mathbb{R}^m,\langle\cdot,\cdot\rangle)
\quad \text{and} \quad
(\mathcal{H}^n,(\cdot,\cdot)_{\rho,n}),
\end{align}
where $\langle\cdot,\cdot\rangle$ is the Euclidean inner product.
Let
\begin{align}
J^\top : \mathcal{H}^n \longrightarrow \mathbb{R}^m
\end{align}
denote the real Hilbert adjoint defined by
\begin{align}
(Ju,y)_{\rho,n} = \langle u, J^\top y\rangle,
\qquad
\forall\,u\in\mathbb{R}^m,\ y\in\mathcal{H}^n .
\end{align}
Let
\begin{align}
\Pi := \Pi_{\mathrm{Range}(J)}
\end{align}
denote the projection onto $\mathrm{Range}(J)$ in
$(\mathcal{H}^n,(\cdot,\cdot)_{\rho,n})$.

Define
\begin{align}
\sigma_{\min}(J)
&:= \inf_{\lVert u\rVert=1} \lVert Ju\rVert_{\rho,n} := \mathfrak{a} \\
\sigma_{\max}(J)
&:= \sup_{\lVert u\rVert=1} \lVert Ju\rVert_{\rho,n} := \mathfrak{b} .
\end{align}
Equivalently,
\begin{align}
\sigma_{\min}(J)^2 = \lambda_{\min}(J^\top J) = \mathfrak{a}^2 ,
\end{align}
where $J^\top J : \mathbb{R}^m \to \mathbb{R}^m$ is self-adjoint and positive semidefinite.

\begin{lemma}[Projection--singular value inequality]
For every $y\in\mathcal{H}^n$,
\begin{align}
\lVert J^\top y\rVert
\;\ge\;
\sigma_{\min}(J)\,
\lVert \Pi y\rVert_{\rho,n}.
\label{eq:proj_sv_weighted}
\end{align}
\end{lemma}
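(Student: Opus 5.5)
The plan is to reduce the inequality to a statement about vectors in the finite-dimensional range of $J$. First I will show that $J^\top$ annihilates the orthogonal complement of $\mathrm{Range}(J)$. If $z\perp\mathrm{Range}(J)$, then for every $u\in\mathbb{R}^m$ we have $\langle u,J^\top z\rangle=(Ju,z)_{\rho,n}=0$, so $J^\top z=0$. Writing $y=\Pi y+(\mathds{1}-\Pi)y$ then gives $J^\top y=J^\top \Pi y$. Because $\mathrm{Range}(J)$ is finite dimensional (at most $m$), it is closed, so $\Pi$ is well defined. Hence it suffices to prove $\lVert J^\top w\rVert\ge\sigma_{\min}(J)\lVert w\rVert_{\rho,n}$ for $w\in\mathrm{Range}(J)$.

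If $\sigma_{\min}(J)=0$ the claim is trivial, so I assume $\mathfrak{a}>0$. The uniform Gram bound gives this on $\Omega$, but I will not rely on it; the lemma holds pointwise regardless. Then $J$ is injective and $J^\top J$ is invertible with $J^\top J\succeq\mathfrak{a}^2 I$.

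Take $w=Ju$. I will bound $\lVert w\rVert_{\rho,n}^2$ in two ways and compare. On one side,
\begin{align}
\lVert w\rVert_{\rho,n}^2=(Ju,Ju)_{\rho,n}=\langle u,J^\top J u\rangle\le \lVert u\rVert\,\lVert J^\top w\rVert
\end{align}
by Cauchy--Schwarz. On the other side, $\lVert w\rVert_{\rho,n}=\lVert Ju\rVert_{\rho,n}\ge\mathfrak{a}\lVert u\rVert$ by the definition of $\sigma_{\min}$, so $\lVert u\rVert\le\lVert w\rVert_{\rho,n}/\mathfrak{a}$. Substituting this into the first bound gives $\lVert w\rVert_{\rho,n}^2\le \lVert w\rVert_{\rho,n}\lVert J^\top w\rVert/\mathfrak{a}$. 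If $w\neq 0$, dividing by $\lVert w\rVert_{\rho,n}$ yields $\lVert J^\top w\rVert\ge\mathfrak{a}\lVert w\rVert_{\rho,n}$, and the case $w=0$ is trivial. Combined with the first reduction, this proves \eqref{eq:proj_sv_weighted}.

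There is no serious obstacle here. The one point that needs care is that the adjoint is the real adjoint with respect to $(\cdot,\cdot)_{\rho,n}=\Re\,\Tr(\rho\,\cdot^\dagger\cdot)$. I will check that this form is a genuine real inner product, which holds because $\rho\succ0$. I will also check that the kernel/range orthogonality $\ker J^\top=\mathrm{Range}(J)^\perp$ is taken with respect to this same inner product. Both are immediate from the defining relation $(Ju,y)_{\rho,n}=\langle u,J^\top y\rangle$.
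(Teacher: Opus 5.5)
Your proposal is correct and follows the same route as the paper. Both split $y$ into its component in $\mathrm{Range}(J)$ and its orthogonal complement, and use $(Ju,z)_{\rho,n}=\langle u,J^\top z\rangle$ to see that $J^\top$ annihilates the orthogonal part. Your Cauchy--Schwarz argument for $\lVert J^\top w\rVert\ge\sigma_{\min}(J)\lVert w\rVert_{\rho,n}$ with $w=Ju\in\mathrm{Range}(J)$ supplies the final step, which the paper only asserts, so your version is the more complete one.
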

\begin{proof}
    Decompose $y = y_{\perp} + y_{\parallel}$, where the perpendicular and parallel components are with respect to $J$. Now note that 
    \begin{align}
        J^\top y &= J^\top(y_\perp + y_\parallel)\notag \\
        &= (J,y_\perp)_{\rho,n} + (J,y_\parallel)_{\rho,n} \notag \\
        &= (J,y_\parallel)_{\rho,n}\;.
    \end{align}

Since $y_\parallel \in \text{Range}(J)$ by definition, $\|J^T y \| = \| J^T y_\parallel  \| \geq \sigma_{min} (J) \| \Pi y \|_{\rho,n}$. 

\end{proof}


\subsection{Polyak-\L{}ojasiewicz Inequality}
We establish a \emph{local} PL inequality for the score-matching loss, which guarantees convergence of gradient descent to the global minimum.

\subsubsection{Fundamental constants}

The following constants characterize the geometry of the optimization landscape:

\begin{definition}[Gram matrix bounds]
\label{def:gram_bounds}
For a compact parameter domain $\Omega \ni \theta^*$, define:
\begin{align}
\mathfrak{a}&:= \inf_{\theta \in \Omega} \sigma_{\min}(J(\theta)) > 0 \\
\mathfrak{b} &:= \sup_{\theta \in \Omega} \sigma_{\max}(J(\theta)) < \infty
\end{align}
where $\sigma_{\min}$ and $\sigma_{\max}$ denote the minimum and maximum singular values. Equivalently, $\mathfrak{a}^2 = \inf_\theta \lambda_{\min}(G(\theta))$ and $\mathfrak{b}^2 = \sup_\theta \lambda_{\max}(G(\theta))$, where $G(\theta) = J(\theta)^\top J(\theta)$ is the Gram matrix.
\end{definition}

The bounds $\mathfrak{a}$ and $\mathfrak{b}$ quantify identifiability: $\mathfrak{a} > 0$ ensures that distinct parameters produce distinct scores (the inverse problem is well-posed), while $\mathfrak{b} < \infty$ ensures stability.

\begin{definition}[Lipschitz constant]
\label{def:lipschitz}
The residual $r(\theta)$ is Lipschitz continuous with constant $L$ if
\begin{align}
\|r(\theta_1) - r(\theta_2)\|_{\rho,n} \leq L \|\theta_1 - \theta_2\|
\end{align}
for all $\theta_1, \theta_2 \in \Omega$. Equivalently, $\|J(\theta)\|_{\mathrm{op}} \leq L$ for all $\theta$.
\end{definition}

\begin{definition}[Jacobian Lipschitz constant]
\label{def:jacobian_lipschitz}
The Jacobian $J(\theta)$ is Lipschitz continuous with constant $L_J$ if
\begin{align}
\|J(\theta_1) - J(\theta_2)\|_{\mathrm{op}} \leq L_J \|\theta_1 - \theta_2\|
\end{align}
for all $\theta_1, \theta_2 \in \Omega$.
\end{definition}

\subsubsection{Local range condition}

The key technical result is that near the true parameters, the residual $r(\theta)$ remains well-aligned with the range of the Jacobian $J(\theta)$.

\begin{theorem}[Local range condition]
\label{thm:local_range}
Let $\Pi_\theta := \Pi_{\mathrm{Range}(J(\theta))}$ denote the orthogonal projection onto the range of $J(\theta)$. For
\begin{align}
\delta := \frac{\mathfrak{a}}{3L_J}
\end{align}
and all $\theta \in B(\theta^*, \delta)$, the residual satisfies
\begin{align}
\|\Pi_\theta \, r(\theta)\|_{\rho,n} \geq \gamma \|r(\theta)\|_{\rho,n}
\label{eq:local_range_cond}
\end{align}
with
\begin{align}
\gamma := \frac{\mathfrak{a}}{2\mathfrak{b} + \mathfrak{a}}.
\label{eq:gamma_def}
\end{align}
\end{theorem}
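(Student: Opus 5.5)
The plan is a first-order Taylor expansion of the residual around the current point $\theta$, evaluated at the zero $r(\theta^\ast)=0$. It shows that $r(\theta)$ is, up to a quadratically small remainder, an element of $\mathrm{Range}(J(\theta))$. Fix $\theta\in B(\theta^\ast,\delta)$ and write $\Delta:=\theta-\theta^\ast$. I assume that $B(\theta^\ast,\delta)\subseteq\Omega$, shrinking $\Omega$ or enlarging it to a convex compact set if necessary, so that the segment $\theta^\ast+t\Delta$, $t\in[0,1]$, lies in $\Omega$ and the constants $\mathfrak{a},\mathfrak{b},L_J$ of Definitions~\ref{def:gram_bounds} and~\ref{def:jacobian_lipschitz} apply along it. The integral form of the mean value theorem in the Hilbert space $(\mathcal{H}^n,(\cdot,\cdot)_{\rho,n})$ gives
\begin{align}
    r(\theta) = r(\theta)-r(\theta^\ast) = J(\theta)\Delta + R(\theta),\qquad
    R(\theta):=\int_0^1\big(J(\theta^\ast+t\Delta)-J(\theta)\big)\Delta\,dt .
\end{align}
The Jacobian Lipschitz bound then yields $\|R(\theta)\|_{\rho,n}\le \int_0^1 L_J(1-t)\|\Delta\|^2dt=\tfrac{L_J}{2}\|\Delta\|^2$.

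Next I bound the two sides of \eqref{eq:local_range_cond} separately in terms of $\|\Delta\|$. Since $J(\theta)\Delta\in\mathrm{Range}(J(\theta))$, we have $\Pi_\theta r(\theta)=J(\theta)\Delta+\Pi_\theta R(\theta)$. Because $\Pi_\theta$ is an orthogonal projection, it is nonexpansive, and $\|J(\theta)\Delta\|_{\rho,n}\ge\sigma_{\min}(J(\theta))\|\Delta\|\ge\mathfrak{a}\|\Delta\|$ by the uniform Gram bound. Hence
\begin{align}
    \|\Pi_\theta r(\theta)\|_{\rho,n}\ \ge\ \mathfrak{a}\|\Delta\|-\tfrac{L_J}{2}\|\Delta\|^2,\qquad
    \|r(\theta)\|_{\rho,n}\ \le\ \mathfrak{b}\|\Delta\|+\tfrac{L_J}{2}\|\Delta\|^2 .
\end{align}
On the ball, $\tfrac{L_J}{2}\|\Delta\|\le \tfrac{L_J}{2}\cdot\tfrac{\mathfrak{a}}{3L_J}=\tfrac{\mathfrak{a}}{6}$. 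This gives $\|\Pi_\theta r\|\ge\tfrac{5\mathfrak{a}}{6}\|\Delta\|$ and $\|r\|\le(\mathfrak{b}+\tfrac{\mathfrak{a}}{6})\|\Delta\|$.

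Combining the two bounds, for $\Delta\neq0$ I get $\|\Pi_\theta r\|/\|r\|\ge 5\mathfrak{a}/(6\mathfrak{b}+\mathfrak{a})$. This is at least $\gamma=\mathfrak{a}/(2\mathfrak{b}+\mathfrak{a})$, because $5(2\mathfrak{b}+\mathfrak{a})\ge 6\mathfrak{b}+\mathfrak{a}$. If $\Delta=0$, both sides of \eqref{eq:local_range_cond} vanish and the claim is trivial.

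The only delicate step is the remainder estimate. It needs $r$ to be $C^1$ with a Lipschitz Jacobian on a convex neighbourhood containing the segment, and it needs the uniform lower bound $\mathfrak{a}>0$ to hold there. Smoothness follows from the explicit formula \eqref{eq:app_dDelta_rho}, which is analytic in $\theta$. The lower bound follows from the uniform Gram bound under \cref{assump:S1,assump:S2,assump:S3,assump:S4}, provided $\Omega$ is taken to contain the ball $B(\theta^\ast,\delta)$. I would state this containment explicitly, since $\delta$ depends on $\mathfrak{a}$ and $L_J$, which are themselves defined over $\Omega$. One fixes $\Omega$ first, say a large closed ball around $\theta^\ast$, and checks that $\delta$ is small enough that $B(\theta^\ast,\delta)\subseteq\Omega$.
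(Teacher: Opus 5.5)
Your proof is correct and follows the same route as the paper. Write $r(\theta)=J(\theta)\Delta+\text{remainder}$, use $J(\theta)\Delta\in\mathrm{Range}(J(\theta))$ and the nonexpansiveness of $\Pi_\theta$ to get $\|\Pi_\theta r\|_{\rho,n}\ge\mathfrak{a}\|\Delta\|-\|\text{remainder}\|$, bound $\|r\|_{\rho,n}\le\mathfrak{b}\|\Delta\|+\|\text{remainder}\|$, and take the ratio on the ball of radius $\mathfrak{a}/(3L_J)$.

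The only difference is where you expand. You expand directly about $\theta$ and get a remainder of $\tfrac{L_J}{2}\|\Delta\|^2$. The paper expands about $\theta^\ast$ and then swaps $J(\theta^\ast)$ for $J(\theta)$, which gives $\tfrac{3L_J}{2}\|\Delta\|^2$. Your version yields the sharper ratio $5\mathfrak{a}/(6\mathfrak{b}+\mathfrak{a})\ge\gamma$. You also state explicitly that $B(\theta^\ast,\delta)$ and the segment must lie in $\Omega$; the paper leaves this assumption implicit.
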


\begin{proof}
Let $u = \theta - \theta^*$. Since $r(\theta^*) = 0$, Taylor expansion gives
\begin{align}
r(\theta) = J(\theta^*) u + R(\theta, \theta^*)
\end{align}
where the remainder satisfies $\|R(\theta, \theta^*)\|_{\rho,n} \leq \frac{L_J}{2}\|u\|^2$ by the Lipschitz continuity of $J$.

Rewriting in terms of $J(\theta)$ gives
\begin{align}
r(\theta) &= J(\theta) u + \underbrace{(J(\theta^*) - J(\theta))u + R(\theta, \theta^*)}_{=: \tilde{R}(\theta, \theta^*)}.
\end{align}

The modified remainder satisfies
\begin{align}
\|\tilde{R}(\theta, \theta^*)\|_{\rho,n} 
&\leq \|J(\theta^*) - J(\theta)\|_{\mathrm{op}} \|u\| + \frac{L_J}{2}\|u\|^2 \notag\\
&\leq L_J \|u\|^2 + \frac{L_J}{2}\|u\|^2 = \frac{3L_J}{2}\|u\|^2.
\end{align}

Projecting onto $\mathrm{Range}(J(\theta))$ and using $\Pi_\theta J(\theta) = J(\theta)$ gives
\begin{align}
\Pi_\theta \, r(\theta) = J(\theta) u + \Pi_\theta \tilde{R}(\theta, \theta^*).
\end{align}

By the triangle inequality and $\sigma_{\min}(J(\theta)) \geq \alpha$, we obtain
\begin{align}
\|\Pi_\theta \, r(\theta)\|_{\rho,n} 
&\geq \|J(\theta) u\|_{\rho,n} - \|\tilde{R}(\theta, \theta^*)\|_{\rho,n} \notag\\
&\geq \mathfrak{a} \|u\| - \frac{3L_J}{2}\|u\|^2.
\label{eq:Pr_lower}
\end{align}

For $\|u\| \leq \delta = \mathfrak{a}/3L_J$, we have
\begin{align}
\|\Pi_\theta \, r(\theta)\|_{\rho,n} \geq \frac{\mathfrak{a}}{2}\|u\|.
\label{eq:Pr_lower_final}
\end{align}

For the upper bound on $\|r(\theta)\|_{\rho,n}$, we have
\begin{align}
\|r(\theta)\|_{\rho,n} 
&\leq \|J(\theta) u\|_{\rho,n} + \|\tilde{R}(\theta, \theta^*)\|_{\rho,n} \notag\\
&\leq \mathfrak{b} \|u\| + \frac{3L_J}{2}\|u\|^2 \notag\\
&\leq \left(\mathfrak{b} + \frac{\mathfrak{a}}{2}\right)\|u\|
\label{eq:r_upper}
\end{align}
where the last inequality uses $\|u\| \leq \frac{\mathfrak{a}}{3L_J}$.

Combining \eqref{eq:Pr_lower_final} and \eqref{eq:r_upper} gives
\begin{align}
\|\Pi_\theta \, r(\theta)\|_{\rho,n} \geq \frac{\mathfrak{a}/2}{\mathfrak{b} + \mathfrak{a}/2} \|r(\theta)\|_{\rho,n} = \gamma \|r(\theta)\|_{\rho,n}.
\end{align}
\end{proof}

\subsubsection{PL inequality}

\begin{theorem}[Local Polyak-\L{}ojasiewicz inequality]
\label{thm:pl_inequality}
For $\theta \in B(\theta^*, \delta)$ with $\delta = \mathfrak{a}/3L_J$, the score-matching loss satisfies
\begin{align}
\|\nabla \mathcal{L}(\theta)\|^2 \geq \mu \cdot \mathcal{L}(\theta)
\label{eq:pl_inequality}
\end{align}
with PL constant
\begin{align}
\mu = 4\mathfrak{a}^2 \gamma^2 = \frac{\mathfrak{a}^4}{(\mathfrak{b} + \mathfrak{a}/2)^2}\;.
\label{eq:mu_def}
\end{align}
\end{theorem}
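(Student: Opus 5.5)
The plan is to chain three earlier ingredients: the gradient formula \eqref{eq:app_grad_rho}, the projection--singular value inequality \eqref{eq:proj_sv_weighted}, and the local range condition of Theorem~\ref{thm:local_range}. Fix $\theta\in B(\theta^*,\delta)$ with $\delta=\mathfrak{a}/3L_J$; we assume $B(\theta^*,\delta)\subseteq\Omega$, shrinking $\delta$ if necessary, so that all constants of Definitions~\ref{def:gram_bounds}--\ref{def:jacobian_lipschitz} apply along the segment from $\theta^*$ to $\theta$. Since the loss is realizable, $\mathcal{L}(\theta^*)=0$ and $\mathcal{L}(\theta)=\|r(\theta)\|_{\rho,n}^2$. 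It therefore suffices to bound $\|\nabla\mathcal{L}(\theta)\|^2$ from below by a multiple of $\|r(\theta)\|_{\rho,n}^2$.

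\textbf{Step 1.} By \eqref{eq:app_grad_rho}, we have $\|\nabla\mathcal{L}(\theta)\|^2=4\|J(\theta)^\top r(\theta)\|^2$.

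\textbf{Step 2.} Apply the Lemma with $y=r(\theta)$. This gives $\|J(\theta)^\top r(\theta)\|\ge\sigma_{\min}(J(\theta))\|\Pi_\theta r(\theta)\|_{\rho,n}$. By Definition~\ref{def:gram_bounds}, $\sigma_{\min}(J(\theta))\ge\mathfrak{a}$ on $\Omega$; this positivity comes from the uniform Gram bound under \cref{assump:S1,assump:S2,assump:S3,assump:S4}.

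\textbf{Step 3.} Theorem~\ref{thm:local_range} gives $\|\Pi_\theta r(\theta)\|_{\rho,n}\ge\gamma\|r(\theta)\|_{\rho,n}$, with $\gamma=\mathfrak{a}/(2\mathfrak{b}+\mathfrak{a})$.

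Combining the three steps,
\begin{align}
\|\nabla\mathcal{L}(\theta)\|^2\ge 4\mathfrak{a}^2\gamma^2\|r(\theta)\|_{\rho,n}^2=4\mathfrak{a}^2\gamma^2\,\mathcal{L}(\theta).
\end{align}
The explicit form then follows from simplifying the constant: $4\mathfrak{a}^2\gamma^2=4\mathfrak{a}^4/(2\mathfrak{b}+\mathfrak{a})^2=\mathfrak{a}^4/(\mathfrak{b}+\mathfrak{a}/2)^2$, which is $\mu$ in \eqref{eq:mu_def}.

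The main point needing care is Step 2. The proof of the Lemma, as written, only sketches why $J^\top$ annihilates the component of $y$ orthogonal to $\mathrm{Range}(J)$. I would make this precise: $(Ju,y_\perp)_{\rho,n}=0$ for all $u$, so $J^\top y_\perp=0$. Then, with $y_\parallel=Jw$, we get $\|J^\top Jw\|\ge\lambda_{\min}(J^\top J)\|w\|$ and $\|Jw\|_{\rho,n}\le\|J^\top J\|^{1/2}\|w\|$. This alone would give a bound with $\mathfrak{a}^2/\mathfrak{b}$ rather than $\mathfrak{a}$. The sharp version instead uses the SVD of the finite-rank operator $J$: write $y_\parallel=\sum_i c_i\,e_i$ along left singular vectors $e_i$ with singular values $s_i$. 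Then $\|J^\top y_\parallel\|^2=\sum_i s_i^2|c_i|^2\ge\sigma_{\min}^2\|y_\parallel\|^2$, which is the claimed inequality.

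I would also check that the Taylor remainder used in Theorem~\ref{thm:local_range} is justified for Fr\'echet-differentiable $r$ with $L_J$-Lipschitz Jacobian. This holds by the integral form $r(\theta)=\int_0^1 J(\theta^*+tu)u\,dt$, which requires the segment from $\theta^*$ to $\theta$ to lie in $\Omega$. That is exactly why we need the inclusion $B(\theta^*,\delta)\subseteq\Omega$ assumed at the start.
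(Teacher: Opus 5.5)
Your proposal is correct and follows the paper's proof essentially step for step: the gradient formula $\nabla\mathcal{L}=2J^\top r$, the projection--singular value inequality with $\sigma_{\min}(J(\theta))\ge\mathfrak{a}$, the local range condition, and then squaring to get $\mu=4\mathfrak{a}^2\gamma^2$. Your additions are sound and fill in points the paper leaves implicit: the SVD argument that gives the sharp factor $\sigma_{\min}(J)$ in the lemma rather than $\mathfrak{a}^2/\mathfrak{b}$, and the requirement $B(\theta^*,\delta)\subseteq\Omega$, which is needed both for the constants and for the integral-form Taylor remainder.
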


\begin{proof}
The gradient of the loss is
\begin{align}
\nabla \mathcal{L}(\theta) = 2 J(\theta)^\top r(\theta).
\end{align}

For any matrix $A$ and vector $v$, we have $\|A^\top v\| \geq \sigma_{\min}(A) \|\Pi_{\mathrm{Range}(A)} v\|$. Applying this with $A = J(\theta)$ and $v = r(\theta)$ gives
\begin{align}
\|\nabla \mathcal{L}(\theta)\| 
&= 2\|J(\theta)^\top r(\theta)\| \notag\\
&\geq 2 \sigma_{\min}(J(\theta)) \|P_\theta \, r(\theta)\|_{\rho,n} \notag\\
&\geq 2\mathfrak{a} \cdot \gamma \|r(\theta)\|_{\rho,n}
\end{align}
where we used $\sigma_{\min}(J(\theta)) \geq \mathfrak{a}$ and the local range condition \eqref{eq:local_range_cond}.

Squaring both sides and using $\mathcal{L}(\theta) = \|r(\theta)\|_{\rho,n}^2$ gives
\begin{align}
\|\nabla \mathcal{L}(\theta)\|^2 \geq 4\mathfrak{a}^2 \gamma^2 \mathcal{L}(\theta) = \mu \cdot \mathcal{L}(\theta).
\end{align}
\end{proof}

\paragraph*{Convergence guarantee.}
For gradient descent with step size $\eta = \Theta( 1/L)$, where $L$ is the Lipschitz constant of $\nabla \mathcal{L}$, the PL inequality implies linear convergence:
\begin{align}
\mathcal{L}(\theta_k) - \mathcal{L}(\theta^*) \leq (1 - \eta \mu)^k \left(\mathcal{L}(\theta_0) - \mathcal{L}(\theta^*)\right).
\end{align}
The number of iterations to reach $\epsilon$-accuracy is $O(\mu^{-1} \log(1/\epsilon))$ \cite{karimi2016}.
 ~~

\end{document}